\newlength{\actualtopmargin}
\newlength{\actualsidemargin}
\theoremstyle{plain}
\newtheorem{theorem}{Theorem}
\theoremstyle{definition}
\newtheorem{definition}[theorem]{Definition}
\theoremstyle{remark}
\theoremstyle{plain}
\newtheorem*{theorem*}{Theorem}
\newtheorem*{lemma*}{Lemma}
\newtheorem*{corollary*}{Corollary}
\newtheorem*{proposition*}{Proposition}
\newtheorem*{claim*}{Claim}
\newcommand\highlight[2][]{\tikz[overlay]\node[fill=red,inner sep=2pt, anchor=text, rectangle, rounded corners=1.4mm,#1] {\color{white}#2};\phantom{#2}}
\newcommand{\ii}{\mathbb{I}}
\newcommand{\poly}{\mathrm{poly}}
\newcommand{\oo}[1]{\Theta\left(#1\right)} 
\newcommand{\oOmega}[1]{\Omega\left(#1\right)} 
\newcommand{\oOh}[1]{O\left(#1\right)} 
\newcommand{\bra}[1]{\langle #1 \vert}
\newcommand{\ket}[1]{\vert #1 \rangle}
\newcommand{\braket}[2]{\langle #1 \vert #2 \rangle}
\begin{document}
\title{\Large \textbf{The Feynman-Kitaev computer's clock: bias, gaps, idling and pulse tuning}}
\author[1]{Libor Caha\footnote{libor.caha@savba.sk}}
\affil[1,3]{Research Center for Quantum Information, Institute of Physics, Slovak Academy of Sciences, D\'ubravsk\'a cesta 9, 845 11 Bratislava, Slovakia}
\author[2]{Zeph Landau}
\affil[2]{Department of Electrical Engineering and Computer Sciences, University of California, Berkeley, CA 94720, U.S.A.}
\author[3]{Daniel Nagaj\footnote{daniel.nagaj@savba.sk}}
\maketitle
\vspace{-5mm}


\begin{abstract}
	We present a collection of results about the clock in Feynman's computer construction and Kitaev's Local Hamiltonian problem. First,
	by analyzing the spectra of quantum walks on a line with varying endpoint terms,
	we find a better lower bound on the gap of the Feynman Hamiltonian, which translates into a less strict promise gap requirement for the QMA-complete Local Hamiltonian problem. We also translate this result into the language of adiabatic quantum computation. Second, introducing an idling clock construction with a large state space but fast Cesaro mixing, we provide a way for achieving an arbitrarily high success probability of computation with Feynman's computer with only a logarithmic increase in the number of clock qubits. Finally, we tune and thus improve the costs (locality, gap scaling) of implementing a (pulse) clock with a single excitation.
\end{abstract}

\maketitle

\section{Introduction}

The need to describe and find the properties of many-body systems in quantum physics has lead to a large collection of interesting computational problems.
Some are easy for classical computers \cite{Q2SAT, Landau:2015aa}, some efficiently verifiable on a quantum computer \cite{8state, Schuch:2009aa, AdamBookatzReview}, and others even undecidable \cite{Cubitt:2015aa}.
The development of numerical methods for these problems is a field to itself with exciting new developments motivated by quantum information \cite{SCHOLLWOCK201196, PEPS}.
On the other hand, the goal of quantum Hamiltonian complexity \cite{TCS-066} is to theoretically understand the universal power of models of computation based on local Hamiltonians \cite{MontanaroCubittPiddock, UniversalAdiabaticCircuitHamiltonian},
as well as to characterize the computational complexity of Hamiltonian-based optimization \cite{Quantum3SAT}, rewriting \cite{JanzingWocjan}, connectivity \cite{gscon}, degeneracy \cite{UniqueQWitness}, sampling \cite{Aaronson:2011:CCL:1993636.1993682} and other types of problems.

Some of the questions involve static properties of the Hamiltonians describing the system. For example, the existence of eigenstates with a certain energy bound \cite{KitaevBook}, the behavior of quantum correlations \cite{MovaShor}, or the possibility of finding parent Hamiltonians given an eigenstate \cite{2018arXiv180207827G}.
Other questions involve dynamics, asking about computational and universality and simulation \cite{MontanaroCubittPiddock},
or the possibilities of state preparation \cite{AdiabaticStateGeneration, PEPSprepare}.
The roots of some of these questions can be traced back to Feynman, who devised a computational model based on unitary evolution with a fixed quantum mechanical Hamiltonian \cite{FeynmanQMC}.
There is a crucial difference from classical computation -- one can no longer efficiently read out and store (copy) the state of the system at any point of the computation. Feynman's computer works with superpositions over snapshots of the computation, with unitary transformations according to Schroedinger evolution with a particular Hamiltonian, on a system with two registers -- clock and data. There are many ways one can implement this by a local Hamiltonian, depending on the intended application. Our goal is to improve several of these techniques.

This paper is a collection of results about clocks for quantum complexity constructions, tied together by quantum walk techniques. We utilize local interactions to construct the clock register and couple it to what happens in the data registers, with improved efficiency (fewer required steps), complexity requirements (gap promise), success probability (spatially efficient probabilistic computation with a tunable success rate), and locality of interactions (few-body terms).

In Section~\ref{sec:LH}, we start with a review Feynman's ideas of computing with a Hamiltonian, Kitaev's local Hamiltonian problem, and the clocks that they use.
We then present our first result
about a class of Hamiltonians describing a clock biased towards one end of the computation in Section~\ref{sec:biasedwalk}. Relying on a mapping to quantum walks on a line with endpoint self-loops, we show that these Hamiltonians are gapped.

Second, in Section~\ref{sec:KitaevGap} we apply what we learned about clocks with biased ends to improve the promise bound for Kitaev's local Hamiltonian problem. Note that this bound has been recently independently similarly improved from $\oOmega{N^{-3}}$ to $\oOmega{N^{-2}}$ by Bausch \& Crosson \cite{BauschCrossonGap} who have also looked at tridiagonal Hamiltonians, but used a Markov chain mixing technique instead of quantum walks. They also showed that this bound is tight for any clock whose Hamiltonian is tridiagonal in the time-register basis.

Our third, negative result is an analysis of the efficiency of universal computation by adiabatic evolution in Section~\ref{sec:adiabatic}, relying on what we learned about gaps of biased clocks. We find that adiabatic quantum computation with standard Hamiltonians does not yield a natural quadratic speedup over quantum computation with a static quantum walk Hamiltonian and mixing.

Fourth, we present two ways of doing nothing (idling the engine) to improve the success probabilities for quantum computation with local Hamiltonians in Section~\ref{sec:idling}. Most importantly, we do it efficiently (with a sublinear increase in the number of used qubits). The first construction is designed for static applications in complexity, increasing the overlap of the ground state of a local Hamiltonian with a state containing the result of a computation. The second method is less efficient, but usable in dynamical constructions (i.e. for building a computer).

We envision the use of these results in quantum Hamiltonian complexity applications -- giving one a better starting position for gap amplification, tighter bounds and better understanding of commonly used quantum walks on a line with boundary terms, as well as two methods to efficiently tune the success probability of a computation efficiently in terms of space, running time and locality.


\section{The Feynman-Kitaev Computer, Clocks and Gaps}
\label{sec:LH}

In this Section we briefly review universal computation with the Feynman Hamiltonian and Kitaev's QMA-complete Local Hamiltonian problem.
Quantum computation is usually viewed in terms of the circuit model, with a large unitary circuit on $n$ qubits decomposed into a sequence of unitary gates, each acting on a few qubits. It is also possible to evaluate some of the gates in parallel.
An equivalent universal quantum computation formulation is possible using time independent Hamiltonians. There, an initial state unitarily evolves according to the Schr\"{o}dinger equation for some Hamiltonian $H$ built from local interaction terms, each acting on a few particles.
We will first show such a construction. Second, we will show how it can be translated to a static construction, where a ground state of a local Hamiltonian encodes the progress of a computation with a unitary circuit.

Let us clear some notation and labeling issues. In this paper, we talk about $k$-local Hamiltonians, built from terms act nontrivially only on $k$ particles. This does not necessarily imply {\em geometric locality}, which would means that the particles are also spatially close, e.g. for nearest-neighbor interactions on a lattice.
We also use a simplified notation for operators $O$ acting in a larger Hilbert space $S$, but nontrivially only on a smaller subspace $A$ as $O = A_A \otimes \ii_{S-A}$, writing just $A_A$ instead of the full expression. For example, we will denote the projector $\ii_{1,2} \otimes \ket{00}\bra{00}_{3,4}\otimes \ii_{5,\dots,N}$ by the shorter and more readable $\ket{00}\bra{00}_{3,4}$ acting on the subsystems $3$ and $4$, and implicitly understanding that it acts on a larger Hilbert space with $N$ subsystems.
Finally, we utilize the standard asymptotic (big-O) notation (see e.g. \cite{Cormen:2009:IAT:1614191}), where
\begin{enumerate}
\item $f(n)=\oOh{g(n)}$ means that $f(n)$ is asymptotically bounded from above by $g(n)$, i.e. there exist constants $c, n_0>0$ such that for all $n\geq n_0$ we have $0\leq f(n) \leq c g(n)$.
\item $f(n)=\oOmega{g(n)}$ means that $f(n)$ is asymptotically bounded from below by $g(n)$, i.e. there exist constants $c, n_0>0$ such that for all $n\geq n_0$ we have $0\leq c g(n)\leq f(n)$.
\item $f(n)=\oo{g(n)}$ means that $f(n)$ is asymptotically bounded by $g(n)$ both from above and below, i.e. it obeys $f(n)=\oOh{g(n)}$ and $f(n)=\oOmega{g(n)}$ at the same time.
\item $f(n)=o(g(n))$ means that $f(n)$ is asymptotically dominated by $g(n)$, i.e. for any constant $c>0$ there exists a constant $n_0>0$ such that for all $n\geq n_0$ we have $0\leq f(n) < c g(n)$.
\end{enumerate}

\subsection{A dynamical construction: Feynman's computer}
We now present Feynman's construction for performing a unitary computation by evolving with a static Hamiltonian.
Consider a quantum circuit $U = U_N U_{N-1}\dots U_2 U_1$ composed of $N$ gates.
Our playground will be a Hilbert space made from a {\em clock} register holding $N+1$ possible states $\ket{0},\dots, \ket{N}$ labeling the progress of the computation, and a {\em data} register that will hold the qubits we want to compute on:
\begin{align}
	\mathcal{H}=\mathcal{H}_{\textrm{clock}} \otimes \mathcal{H}_{\textrm{data}}.
	\label{2registers}
\end{align}
When we evolve an initial state $\ket{\psi_0^{0}} =\ket{0}_{\textrm{clock}}\otimes \ket{0\cdots 0}_{\textrm{data}}$
with Feynman's Hamiltonian
\begin{align}
	H_{\textrm{F}} &= \sum_{t=1}^{N}
		\left(
		\ket{t}\bra{t-1}_{\textrm{clock}}\otimes U_t
		+
		\ket{t-1}\bra{t}_{\textrm{clock}}\otimes U_t^\dagger
		\right), \label{Hf}
\end{align}
the resulting state will live in the space
\begin{align}
	\mathcal{H}_0 = \textrm{span}\left\{
		\ket{\psi_t^{0}} = \ket{t}_{\textrm{clock}}\otimes
		\left(U_t \dots U_1 \ket{0\dots 0}_{\textrm{data}}\right), \, t=0,\dots,N
		\right\}. \label{Hpsi}
\end{align}
Observe also that $H_{\textrm{F}} \ket{\psi_t^{0}} =\ket{\psi_{t-1}^{0}}+ \ket{\psi_{t+1}^{0}}$, so the restriction $H_{\textrm{F}}\big|_{\mathcal{H}_0}$ is the Hamiltonian of a continuous-time quantum walk on a line \cite{QWalksActaPhysica} of states $\ket{\psi_t^{0}}$.
Using quantum walk techniques, we can show that when we evolve the initial state $\ket{\psi_0^{0}}$ for a time randomly chosen between $0$ and $\oo{N^2}$, and measure the clock register,
with probability $\oo{N^{-1}}$ we will obtain the state $\ket{N}_{\textrm{clock}}$, and thus $U_N \dots U_1 \ket{0\dots 0}$ (the result of the circuit $U$ applied to the initial state $\ket{0\dots 0}$) in the data register. Therefore, evolution with Feynman's Hamiltonian is a universal quantum computer.

Below, in Section \ref{sec:clocks}, we show that Feynman's computer can be built from local terms, by choosing a local implementation of the clock register states and the Hamiltonian terms inducing transitions terms between the states.

\subsection{A static construction: Kitaev's Hamiltonian}

Here, we show how to turn Feyman's dynamic construction (evolve with $H_{\textrm{F}}$ and measure, obtaining the result of a quantum computation) to a static construction, where the ground state is a {\em history state}, encoding the progress of a quantum computation.

Let us consider the Hilbert space \eqref{2registers} with two (clock/data) registers.
We call
\begin{align}
	\ket{\psi_{\textrm{hist}}^{\varphi}} = \frac{1}{\sqrt{N+1}} \sum_{t=0}^{N} \ket{\psi_t^{\varphi}}
	= \frac{1}{\sqrt{N+1}} \sum_{t=0}^{N} \ket{t}_{\textrm{clock}}
			\otimes \underbrace{U_t U_{t-1} \dots U_1 \ket{\varphi}}_{\ket{\varphi_t}_{\textrm{data}}}
		\label{history}
\end{align}
the {\em history state} of the computation of the circuit $U$ on the initial data register state $\ket{\varphi}$.
Kitaev constructed a Hamiltonian whose ground states have the form of such history states -- uniform superpositions of successive states of the computation of the circuit $U$, along with a {\em clock} register labeling the progress of the computation. This {\em propagation checking} Hamiltonian\footnote{To avoid repeating many $\frac{1}{2}$'s later on, in this paper we choose to omit the usual constant prefactor $\frac{1}{2}$ in $H_{\textrm{prop}}$.} is
\begin{align}
	H_{\textrm{prop}} &= \sum_{t=1}^{N}
		\left(
		\left(\ket{t-1}\bra{t-1} 	+ \ket{t}\bra{t}\right)_{\textrm{clock}}\otimes \ii_{\textrm{data}}
		-
		\ket{t}\bra{t-1}_{\textrm{clock}}\otimes U_t
		-
		\ket{t-1}\bra{t}_{\textrm{clock}}\otimes U_t^\dagger
		\right), \label{Hprop}
\end{align}
built from projector terms and Feynman's Hamiltonian \eqref{Hf}.
In Section \ref{sec:clocks}, we will see that $H_{\textrm{prop}}|_{\mathcal{H}_{\varphi}}$, restricted to the Hilbert space $\mathcal{H}_{\varphi}$  \eqref{Hpsi} spanned by states $\ket{\psi_t^{\varphi}}$ \eqref{history}, is a quantum walk on a line with self-loops.

Kitaev then used it to give a QMA-complete problem, the {\em Local Hamiltonian} \cite{KitaevBook}.
He showed how to construct a Hamiltonian with a ground state energy below some bound only if there exists an initial state $\ket{\varphi}$, for which the output qubit of the state $U\ket{\varphi}$ is $\ket{1}$ with high probability. If there is no such state $\ket{\varphi}$, the ground state energy is above some bound. This is one reason behind why determining with high precision the ground state energy of local Hamiltonians is difficult.

This Hamiltonian is made from four terms:
\begin{align}
	H_{\textrm{K}} = H_{\textrm{prop}} + H_{\textrm{init}} + H_{\textrm{out}} + H_{\textrm{clock}}.
	\label{HKitaev}
\end{align}
Each of them ``checks'' some property of the state, giving lower energies to states that have this property. $H_{\textrm{prop}}$ prefers proper propagation of the computation, ensuring that the low energy states of \eqref{HKitaev} are close in form to the history states \eqref{history}.
The second term in \eqref{HKitaev} checks the proper initialization of ancilla qubits at the start of the computation. $H_{\textrm{out}}$ looks at whether the result of the computation is ``accept'', i.e. whether the state of the designated output data qubit (labelled {\em out}) is $\ket{1}$ at the end of the computation, i.e. when the clock register reads $\ket{N}$. In detail,
\begin{align}
	H_{\textrm{init}} &= \sum_{\textrm{ancillas }a} \ket{0}\bra{0}_{\textrm{clock}} \otimes \ket{1}\bra{1}_{a}, \qquad
		H_{\textrm{out}} =  \ket{N}\bra{N}_{\textrm{clock}} \otimes \ket{0}\bra{0}_{\textrm{out}},
\end{align}
The final term in \eqref{HKitaev} is a clock-checking Hamiltonian,
checking the proper form of the states in the clock register.
The particular implementation of the clock register
 and its interaction with the data \eqref{Hprop} by a local Hamiltonian
is crucial for making the Kitaev Hamiltonian local.
In the next Section
we will see how this can be done, e.g. by Kitaev's original domain-wall (unary) clock (see Section~\ref{sec:dw}).

\subsection{Clock constructions}
\label{sec:clocks}

The basic building block for Feynman's computer (and Kitaev's Local Hamiltonian construction) is a clock -- a register with $N+1$ possible logical states $\ket{0},\dots,\ket{N}$, denoting the linear progress of a computation. Originally, Feynman envisioned it being a hopping pointer particle.
Here we will look at this construction and other options, their properties, and ways to make them local.

Note that one could also construct clocks with a nonlinear progression of states, without unique forward and backward transitions. In recent quantum complexity results \cite{Quantum3SAT}, we have seen the combinations of several clock registers, blind alley transitions, railroad-switching paths and path noncommutatitivity, amongst other ideas. However, there are still interesting things to be learned about the basic linear approaches and their relationship to quantum walks, as we will show below.

The clock for Feynman's computer can be realized by a hopping Hamiltonian (a quantum walk on a line):
\begin{align}
	H_{N}^{\textrm{walk}} &= - \sum_{t=0}^{N-1}
	\left(
	\ket{t+1}\bra{t}+\ket{t}\bra{t+1}
	\right) \label{Hwalk}
\end{align}
acting on a Hilbert space of size $N+1$, spanned by the states $\ket{t}$ for $t=0,\dots,N$. We choose the minus sign in front of the Hamiltonian for convenience, so that later the low-energy states have positive amplitudes. This Hamiltonian is the negative of the adjacency-matrix of a line. Its eigenvectors are then combinations of plane waves with certain momenta, analyzed in detail in Section~\ref{sec:LR}. The gap (difference of two lowest energies) of such Hamiltonians scales as $\oo{N^{-2}}$.

In Kitaev's construction and followup work \cite{OliveiraTerhal,new3local}, the clock Hamiltonian is usually written as a sum of projectors. It is related to \eqref{Hwalk} as an adjacency matrix is related to a Laplacian of a graph with edges corresponding to possible clock transitions.
For each transition $\ket{t} \leftrightarrow \ket{t+1}$ in \eqref{Hwalk}, the projector $\frac{1}{2}\left(\ket{t}-\ket{t+1}\right)\left(\bra{t}-\bra{t+1}\right)$ energetically prefers a uniform superposition of these states, i.e. $\frac{1}{\sqrt{2}}\left(\ket{t}+\ket{t+1}\right)$.
Let us write down the sum of these projectors, and omit the $\frac{1}{2}$ prefactor for simplicity.
We call this Hamiltonian the {\em Laplacian quantum walk}:
\begin{align}
	H_{N}^{\textrm{L}} &= \sum_{t=0}^{N-1}
	\left(\ket{t}-\ket{t+1}\right)\left(\bra{t}-\bra{t+1}\right). \label{HLap}
\end{align}
It is a frustration-free sum of positive semidefinite terms with a unique, 0-energy ground state -- the uniform superposition $\frac{1}{\sqrt{N+1}}\sum_{t=0}^{N} \ket{t}$. We can also view this ground state as a history state \eqref{history} without a data register, for a circuit made out of identity gates.
Expanding \eqref{HLap}, we find its relationship to \eqref{Hwalk}:
\begin{align}
	H_{N}^{\textrm{L}} &=
	\ket{0}\bra{0} + 2\sum_{t=1}^{N-1} \ket{t}\bra{t} + \ket{N}\bra{N} -
	\sum_{t=0}^{N-1}
	\left(\ket{t+1}\bra{t} + \ket{t}\bra{t+1}\right) \nonumber\\
	&= 2\ii -
	\ket{0}\bra{0} - \ket{N}\bra{N} +H_{N}^{\textrm{walk}}.
	\label{HLapends}
\end{align}
It is the Laplacian matrix for a line graph of length $N+1$.
It can also be interpreted as a shifted quantum walk on a line \eqref{Hwalk} with endpoint projectors. We analyze such walks in Section~\ref{sec:lineLR}.

Let us now look at how the clock Hamiltonians \eqref{Hwalk} and \eqref{HLap} can be implemented in spin systems, in particular, in spin chains with nearest-neighbor or next-nearest-neigbor interactions.

\subsubsection{The pulse clock: an excitation hopping on a line}
\label{sec:pulse}

One can use Hamiltonians with 2-local interactions to implement a linear clock from the previous Section. One option is to model \eqref{Hwalk} by the hopping of a single excitation in a spin-$\frac{1}{2}$ chain of length $N+1$.
The states $\ket{t}$ for $H_{N}^{\textrm{walk}}$ correspond to spin chain states $\ket{0 \cdots 0 1_x 0 \cdots 0}$ with the $\ket{1}$ at position $x=t+1$.
In Feynman's computer, the position of the $\ket{1}$ (the excitation, the pointer) then measures the progress of the computation.
The nearest-neighbor spin chain Hamiltonian reads
\begin{align}
	H_{N}^{\textrm{pulse}} &= - \sum_{x=1}^{N}
	\left(\ket{01}\bra{10}+\ket{10}\bra{01}\right)_{x,x+1}. \label{Hpulse}
\end{align}
For completeness, a Laplacian walk version of \eqref{Hpulse} would read
\begin{align}
	H_{N}^{\textrm{pulse,L}} &= \sum_{x=1}^{N}
	\left(\ket{01}-\ket{10}\right)\left(\bra{01}-\bra{10}\right)_{x,x+1}. \label{Hpulseproj}
\end{align}

To ensure that our pulse clock (in a spin chain) works in the {\em good} subspace with a single excitation, we only need to initialize it this way. Observe that the Hamiltonians \eqref{Hpulse}, \eqref{Hpulseproj} keep the number of 1's in the chain invariant. We show another option in Section~\ref{sec:tuning}: adding a precisely tuned local Hamiltonian that prefers the single-excitation subspace over others (strongly hating neighboring 11's, while locally weakly preferring 1's over 0's).

The properties of the eigenvectors and the spectra of \eqref{Hpulse}, \eqref{Hpulseproj} restricted to the good subspace are the same as the properties of the quantum walk Hamiltonians \eqref{Hwalk}, \eqref{HLap}. Note that these are also well known in condensed matter physics.
The Hamiltonian of the walk on a line \eqref{Hwalk} implemented by the pulse clock \eqref{Hpulse} can be mapped to to the 1-excitation sector of the ferromagnetic XX-model spin chain
\begin{align}
		-\sum_{x} \left( \ket{01}\bra{10}+ \ket{10}\bra{01}\right)_{x,x+1}
		= -\frac{1}{2}\sum_{x} X_x X_{x+1} \left( \ii - Z_x Z_{x+1}\right)
		= -\frac{1}{2}\sum_{x} \left( X_x X_{x+1} + Y_x Y_{x+1}\right).
\end{align}
On the other hand, the behavior of the projector Hamiltonian \eqref{Hpulseproj} can be mapped to the 1-excitation sector of the ferromagnetic Heisenberg (XXX model) chain.
\begin{align}
		\sum_{x} \left( \ket{10}-\ket{01}\right)\left(\bra{10}-\bra{01}\right)_{x,x+1}
		&= \sum_{x}
			\left(\ii - X_x X_{x+1}\right) \frac{1}{2}\left( \ii - Z_x Z_{x+1}\right)\nonumber \\
		&= \frac{1}{2} \ii
			- \frac{1}{2} \sum_{x} \left( X_x X_{x+1} + Y_x Y_{x+1} + Z_x Z_{x+1}\right).
\end{align}

\subsubsection{The domain wall (unary) clock}
\label{sec:dw}

The second option for a clock is the {\em domain wall} (unary) clock.
This was the version of the clock used in Kitaev's 5-local Hamiltonian -- the clock itself is 3-local, and adding 2-qubit gates to build \eqref{Hprop} makes it 5-local.
It involves a progression of states with a single domain wall, like
	$\ket{100000}, \ket{110000}, \ket{111000}, \ket{111100}, \ket{111110}.$
It can be implemented by a 3-local (next-nearest-neighbor) Hamiltonian on a spin-$\frac{1}{2}$ chain of length $N+2$. First, we can model the hopping \eqref{Hwalk} as
\begin{align}
	H_{N}^{\textrm{dw,walk}} = - \sum_{x=1}^{N} \left(
		\ket{110}\bra{100}
		+ \ket{100}\bra{110}
		\right)_{x,x+1,x+2}, \label{Hunaryhop}
\end{align}
while being restricted to the {\em good} subspace spanned by the states $\ket{1\cdots 1_x 0\cdots 0}$ with $x=t+1$ ones corresponding to the clock state $\ket{t}$. We can easily construct a clock-checking Hamiltonian
\begin{align}
H^{\textrm{dw}}_{\textrm{clock-check}} = \sum_{x=1}^{N+1} \ket{01}\bra{01}_{x,x+1}
		+ \ket{0}\bra{0}_1
		+ \ket{1}\bra{1}_{N+2}, \label{Hdwcc}
\end{align}
that energetically favors the good subspace, because only the single-domain-wall states have no neighboring $01$'s, start with a $1$, and end with a $0$.

Again, we can also write down a Laplacian version \eqref{HLap} of the domain-wall clock.
\begin{align}
	H_{N}^{\textrm{dw,L}} = \sum_{x=1}^{N} \left( \ket{100} - \ket{110} \right)
	\left( \bra{100} - \bra{110} \right)_{x,x+1,x+2}. \label{Hunary}
\end{align}
Adding clock-checking \eqref{Hdwcc}, we find that
the positive semidefinite Hamiltonian $H_{N}^{\textrm{dw,L}}+H^{\textrm{dw}}_{\textrm{clock-check}}$
has a unique, frustration-free (annihilated by all projector terms), zero-energy ground state
	$
	\ket{\psi_{\textrm{dw}}} = \frac{1}{\sqrt{N+1}} \sum_{x=1}^{N+1} \ket{1\cdots 1_x 0 \cdots 0}
	$.

The Hamiltonians described in this Section do not introduce or delete domain walls. The Hilbert space thus splits into the invariant {\em good} subspace spanned by states with a single domain wall, and other invariant subspaces. In those, all states have energy at least a constant $E\geq 1$, because each such state is ``detected'' by at least one of the clock-checking terms in \eqref{Hdwcc}.

By construction, \eqref{Hunary} restricted to the good subspace becomes \eqref{HLap},
a rescaled and shifted quantum walk \eqref{Hwalk} on a line of $N+1$ states, with extra endpoint projector terms that can be interpreted as endpoint loops. The eigenvectors of \eqref{HLap} are again combinations of plane waves. We show in Section \ref{sec:lineLR} that the gap (the difference between the two lowest eigenvalues) of this Hamiltonian again scales like $\oOmega{N^{-2}}$.

The pulse and domain-wall clocks are the most ``vanilla'' constructions, where we have complete understanding of the invariant subspaces, the good subspace, the eigenvectors and eigenvalues, the dynamics as well as the gap. Let us turn to slightly more complicated cases, modifying the walk on a line of states. First, we will introduce a bias towards one side in Section~\ref{sec:biasedwalk}, and then analyze what happens in a more general case with varying strength of attraction/repulsion at the endpoints in Section~\ref{sec:lineLR}.

\subsection{Clocks as walks on a line}
\label{sec:biasedwalk}

We now delve into a more general investigation of clocks that correspond to walks on a line. We would like to get a faster computation, larger success probability, larger overlaps with the completed computation, or better spectral properties for Feynman's computer or Kitaev's construction. For this purpose, we will first investigate a walk biased to one side, and then walks with endpoint projectors with varying strength. It turns such walks (clocks) are specific tridiagonal matrices, whose spectral properties we can analyze. The main application we find is an improved lower bound on the required precision for Kitaev's QMA-complete problem Local Hamiltonian.

\begin{figure}%
\begin{center}
\includegraphics[width=16cm]{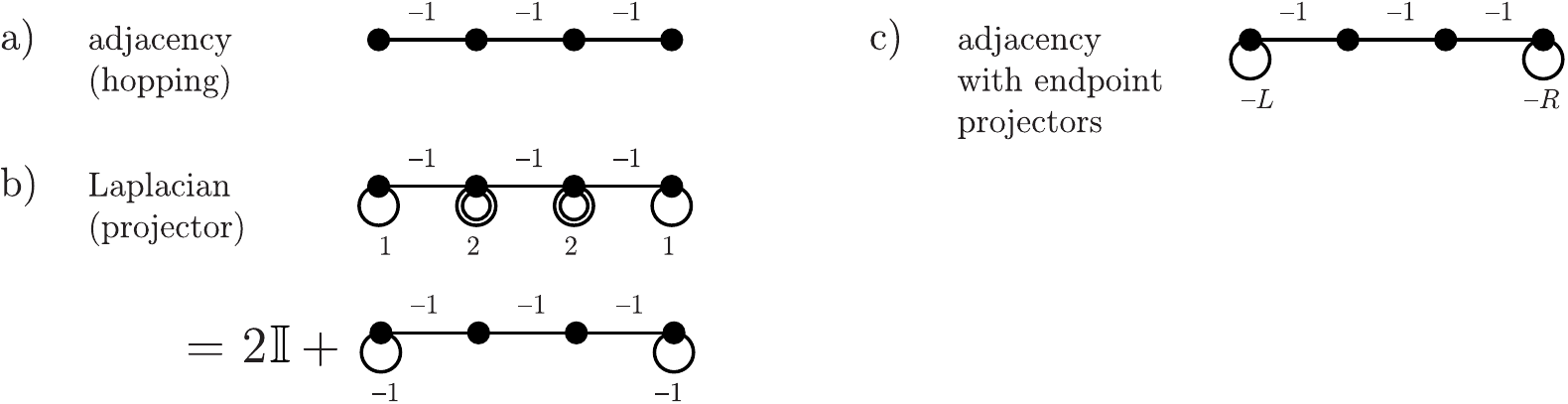}%
\caption{Quantum walks on a line of length $N$. a) The quantum walk on a line (hopping) Hamiltonian $H_N^{\textrm{walk}}$ \eqref{Hwalk} is the negative of the adjacency matrix.
b) The Laplacian walk $H_N^{\textrm{L}}$ \eqref{HLap} includes a self loop on each vertex for each outgoing edge.
c) A more general version $H_N^{(L,R)}$ \eqref{HwalkLR} parametrized by a pair $L,R$ includes endpoint projectors (loops) $-L\ket{0}\bra{0}$, $-R\ket{N}\bra{N}$.
}%
\label{fig:walktypes}%
\end{center}
\end{figure}

In Figure~\ref{fig:walktypes} we depict the types of walks on a line that we investigate here. First, we depict the adjacency walk (hopping) from \eqref{Hwalk}, then the Laplacian walk (projector) from \eqref{HLapends}, and then a generalization with variable endpoint loop projector terms that we will analyze below \eqref{HwalkLR}. We are looking into this particular generalization because its form encompasses many interesting cases. For example, such Hamiltonians appear in the quantum walk algorithm for traversing randomly glued trees by Childs et al. \cite{ChildsTrees} or in adiabatic quantum computation applications
\cite{UniversalAdiabaticCircuitHamiltonian}, \cite{NagajKieferovaSomma}, where one interpolates between endpoint projector terms and a quantum walk/Kitaev's computer term.

The simplest generalization is to
start with \eqref{HLap} and bias the clock towards one side by a parameter $B>1$:
\begin{align}
	H^{\textrm{bias}}_{N} &= 
	\sum_{x=0}^{N-1}
	\left(B\ket{x}-\ket{x+1}\right)
	\left(B\bra{x}-\bra{x+1}\right).  \label{biasedclock}
\end{align}
It is a sum of positive semidefinite terms (each of them is proportional to a projector), and thus positive semidefinite. Instead of the uniform superposition, it energetically prefers superpositions of the form $\ket{x}+B\ket{x+1}$.
The ground state will thus involve an exponential increase in amplitude towards one of the endpoints.
\begin{align}
	\ket{\psi_{\textrm{bias}}} = \frac{1}{\sqrt{\sum_{x=0}^{N} B^{2x} }}
	\sum_{x=0}^{N} B^x \ket{x}. \label{expstate}
\end{align}
This ground state is unique and fully annihilated by $H_{N}^{\textrm{bias}}$, making $H^{\textrm{bias}}_{N}$ frustration-free.
We can rewrite the new Hamiltonian \eqref{biasedclock} as
\begin{align}
	H^{\textrm{bias}}_{N} &= \left(1+B^2\right) \ii + B\left(- \frac{1}{B}\ket{0}\bra{0} - B \ket{N}\bra{N}
	 +  H_{N}^{\textrm{walk}}
	\right), \label{rewriteHbias}
\end{align}
and interpret it as a shifted and rescaled quantum walk on a line \eqref{Hwalk} with self-loops of magnitude $\frac{1}{B}$ and $B$. This is a special case of a walk on a line with general self-loops at the end, which we will analyze in Section~\ref{sec:lineLR} and prove that its gap is a constant.
Looking at the form of \eqref{rewriteHbias}, another generalization with fully tunable endpoint loops comes naturally.

\subsubsection{Walking on a line with tunable endpoint loops}
\label{sec:lineLR}

Let us now analyze a more general case: a quantum walk on a line with $N$ links and endpoint self-loops of constant strength $L$ and $R$.
Our Hamiltonian now has the form
\begin{align}
	H^{(L,R)}_{N} &= -L \ket{0}\bra{0} -R\ket{N}\bra{N} + H_{N}^{\textrm{walk}}, \label{HwalkLR}
\end{align}
where $H_{N}^{\textrm{walk}}$ is minus the adjacency matrix of a line with $N$ links \eqref{Hwalk}.
Using this notation, the Laplacian walk \eqref{HLapends} can also be written as
\begin{align}
	H_{N}^{\textrm{L}} = 2\ii + H^{(1,1)}_{N}. \label{HLapLR}
\end{align}
The biased walk \eqref{biasedclock} can be seen as a special case of $H^{(L,R)}$ with $LR=1$ by setting $L=B$ and $R=\frac{1}{B}$, an extra prefactor and a constant shift, as seen from \eqref{rewriteHbias}.

Our analysis is similar to Childs' \cite{ChildsTrees}, where two identical quantum walks on a line are joined by an edge of different strength. There, the symmetric and antisymmetric sectors can be mapped to a quantum walk on a single line with one endpoint projector. A similar case is the topic of \cite{NagajKieferovaSomma}, where two connected walks on a line include endpoint projectors with strength $s$ and $1-s$.
Here, we talk about general endpoint projectors, in particular positive as well as negative $L,R$.

Because the geometry of the system is mostly a line, the eigenvectors there can be of only two types.
We call the first class {\em goniometric} -- these eigenvectors are combinations of plane waves:
\begin{align}
	\ket{g_p} &=  \sum_{x=0}^{N} \left( a e^{-ipx} + b  e^{ipx} \right)\ket{x},  \qquad
	\label{gonsol}
	E^{(g)}_p = -2\cos p,
\end{align}
with $0\leq p < 2\pi$. Note that $-2 \leq E_p^{(g)} \leq 2$.
We call the second class of eigenvectors {\em hyperbolic} -- they are combinations of hyperbolic functions, obtained by using imaginary momenta $p=iq$:
\begin{align}
	\ket{h_q} &= \sum_{x=0}^{N} \left(c e^{-qx} + d e^{qx} \right)\ket{x},  \qquad
	\label{hypsol}
	E^{(h)}_q = -2\cosh q,
\end{align}
with $q>0$. Note that $E_q^{(h)} <- 2$.

Our goal is to estimate the gap of \eqref{HwalkLR}. We will investigate the outlying energies for the goniometric solutions, as well as the existence of the hyperbolic solutions.
First, let us look at the goniometric solutions. At the endpoints of the line ($x=0$ and $x=N$), the eigenvalue equation $H_N^{(L,R)}\ket{g_p} = E^{(g)}_p \ket{g_p}$ reads\footnote{A quick way to arrive at the equation at the left endpoint is to realize that the contribution from the self-loop has to be the same as if it came from a line continued to $x=-1$, i.e. from a point with amplitude $a e^{ip} +b e^{-ip}$.}
\begin{align}
	- L\left( a + b \right) - \left(a e^{-ip} +b e^{ip}\right)
		&= -\left(e^{ip} +e^{-ip}\right) (a+b),   \label{quantG1}\\
	-R\left( a e^{-iNp} + b e^{iNp} \right) - \left( a e^{-i(N-1)p} + b e^{i(N-1)p} \right)
			&= -\left(e^{ip} +e^{-ip}\right)\left( a e^{-iNp} + b e^{iNp} \right). \nonumber
\end{align}
We can rearrange these to get
\begin{align}
	a \left(L-e^{ip}\right) &= b \left(e^{-ip}-L\right), \label{quantG2}\\
	a \left(e^{-ip}-R\right) & = b\, e^{2iNp} \left( R - e^{ip}\right).  \nonumber
\end{align}
Let us first deal with special cases, making sure we don't divide or multiply by zero when simplifying the above equations. The complex numbers in the brackets in \eqref{quantG2}
are real only for $p \in \{0,\pi\}$, for which $e^{ip}=e^{-ip}$, so only the $a$ part in \eqref{gonsol} has a distinct meaning and we can just set $b=0$. The option $p=0$ then gives a nonzero $a$ if and only if $L=R=1$, when we get a a special solution: the uniform superposition
\begin{align}
	\ket{g^{(L=R=1)}_0} = \frac{1}{\sqrt{N+1}}\sum_{x=0}^{N} \ket{x}
	\label{LR1solution}
\end{align}
with energy $E_0 = -2$.
Similarly, a solution for $p=\pi$ with energy $E_\pi = 2$ exists if and only if $L=R=-1$:
\begin{align}
	\ket{g^{(L=R=-1)}_\pi} = \frac{1}{\sqrt{N+1}}\sum_{x=0}^{N} (-1)^x \ket{x}.
\end{align}
We can also verify that $a=0$ (or $b=0$) doesn't work except in the above special cases.
Therefore, we can multiply the equations in \eqref{quantG2} together, get rid of $ab$ and obtain a quantization condition for the momentum $p$:
\begin{align}
e^{i2Np}
	\left( R -e^{ip}\right)
			\left(L-e^{ip}\right)
	&= \left(R-e^{-ip}\right)
			\left(L-e^{-ip}\right).
	\label{quantG}
\end{align}
This condition has the form $e^{i2Np} = \frac{v}{v^*}\frac{w}{w^*}$ with $v=R-e^{-ip}$ and $w = L-e^{-ip}$. It ties together the arguments of the complex numbers via
\begin{align}
	2Np + 2\pi k = 2\textrm{arg}(v)\,+2\,\textrm{arg}(w)
	= 2\arctan \frac{\sin p}{R-\cos p	} + 2\arctan \frac{\sin p}{L-\cos p	}. \label{vwarg}
\end{align}

We can easily calculate the arguments in some special cases, useful for the proof of a lower bound on the promise gap of Kitaev's Local Hamiltonian in Section 3. Setting $R=1$ gives us $2\textrm{arg}(v)= -p+\pi$, while $R=0$ means $2\textrm{arg}(v)=-2p$, and $R=-1$ results in $2\textrm{arg}(v)= -p$. These special points are of interest for the calculation in Section~\ref{sec:KitaevGap}.

Recall that we are interested in the eigenvectors and eigenvalues near the bottom of the spectrum.
To investigate the lowest possible values of $E_p = -2 \cos p$, we thus need to look at $p\rightarrow 0$.
There, assuming large $N$ and $R,L$ bounded away from 1, we can expand the arguments in \eqref{vwarg}
using the Taylor series and obtain solutions with energies near the bottom and top of the spectrum, at points
\begin{align}
	p_k = \frac{k\pi}{N-\frac{1}{R-1}-\frac{1}{L-1}} + o\left(N^{-1}\right),
\end{align}
for small integers $0<k\ll N$.
The lowest (nonzero) magnitude $p$ for a goniometric state is thus $\oo{N^{-1}}$.
It means this state has energy at least $\oo{N^{-2}}$ higher than the $p=0$ state (if it exists) or any hyperbolic solution (if it exists) that we find below.

Second, let us analyze the hyperbolic solutions. Analogously to \eqref{quantG1} and \eqref{quantG2}, we now get
\begin{align}
	c\left( L -e^{q}\right) &= d \left(e^{-q}-L\right),  \label{quantHstart}\\
	c \left(e^{-q}-R\right) & = d\, e^{2Nq} \left( R -e^{q}\right) . \nonumber
\end{align}
Again, let us first check for special cases.
Picking $c=0$ implies $e^q = \frac{1}{L} = R$, which only works in the special case $LR=1$ discussed in more detail below in Section~\ref{sec:LR}. Such an eigenvector has amplitudes falling off exponentially when moving away from one of the ends. Picking $d=0$ is just like $c=0$ but with exchanging $L\leftrightarrow R$ or $q \leftrightarrow -q$. Finally, when one of the terms in brackets is zero, it implies $c=0$ or $d=0$ and reverts to the above.
With this in mind, we are free to multiply the equations together, and obtain a quantization condition for $q$:
\begin{align}
	e^{2Nq} \left(R-e^{q}\right)\left(L-e^{q}\right)
	&=
	\left(R-e^{-q}\right)
	\left(L- e^{-q}\right).
	\label{quantH}
\end{align}
Let us analyze the behavior of this equation for large $N$.

Without loss of generality, we can assume $L\leq R$ and $q>0$ ($e^q>1$). Note that choosing $q=0$ produces the same state \eqref{LR1solution} as $p=0$ discussed above, while $q<0$ just exchanges $c$ and $d$ in \eqref{hypsol}. Let us label $y=e^q$, observe that $q>0$ implies $y>1$, and rewrite \eqref{quantH} as
\begin{align}
	y^{2N} (R-y)(L-y) = \left(R-\frac{1}{y}\right)\left(L-\frac{1}{y}\right).
	\label{quant2}
\end{align}
What happens when we start near $y=1$ ($q=0$) and start increasing $y$?
Exactly at $y=1$, the two sides of the equation are equal to $(R-1)(L-1)$.
Next, assuming large $N$, the growth of $y^{2N}$ with increasing $y$ dominates everything.
However, the terms in the brackets on the left can become very small near $y=R$ (and $y=L$). The left side changes sign at $y=R$ (or $y=L$), and again quickly reaches large magnitude. Thus, it must achieve the value of the right side (its magnitude is for constant $R\neq 0$ bounded from above by another constant) very close
to the point $y=R$ (and $y=L$). This is how hyperbolic eigenvectors appear, with corresponding eigenvalues
\begin{align}
	E_q = - 2\cosh q = - \left(y + \frac{1}{y}\right)
	\approx - \left(R +\frac{1}{R}\right) = -\frac{1+R^2}{R},
	\label{Rsol}
\end{align}
and of course, $-\frac{1+L^2}{L}$. Because $y>1$, this is possible only if $R>1$ (and similarly, $L>1$).

In the special case $R=L=C>1$, we get two solutions with energy near $-\frac{1+C^2}{C}$. 
There is an exponentially small (in $N$) energy split between them; the state with lower energy is symmetric and the other antisymmetric across the middle of the chain.
On the other hand, because we have $y>1$, when both $R,L\leq 1$, the left side of \eqref{quant2} is strictly larger even without the $y^{2N}$, i.e. $(R-y)(L-y)>(R-1/y)(L-1/y)$, and no hyperbolic eigenvectors exist.
Altogether, we have
\begin{align}
	\begin{array}{r|l}
	& \textrm{hyperbolic solutions} \\
	\hline
	R\geq L>1  & \textrm{two solutions near $e^q = R$ and $e^q = L$}, \\
	\hline
	R>1\geq L & \textrm{one solution near $e^q = R$}, \\
	\hline
	1\geq R\geq L & \textrm{no hyperbolic solutions}.
\end{array}
\end{align}

Therefore, there can be at most two hyperbolic solutions, with eigenvalues \eqref{Rsol} for $R>1$ and similarly for $L>1$. These are at least a constant below $-2$, and for $R\neq L$ also a constant away from each other. In that case, the Hamiltonian is gapped. If both $R,L\leq 1$, there are no hyperbolic solutions, and the gap of the Hamiltonian scales as $\oo{N^{-2}}$.

We can say more about the whole spectrum in the special cases solved below.

\subsubsection{The biased walk}
\label{sec:LR}

Up to a constant shift in energy, the biased walk of Section~\ref{sec:biasedwalk} is the special case of $H^{(L,R)}_N$ with $R = \frac{1}{L} = B > 1$. In this case we can solve the quantization conditions exactly.
Labeling $y=e^q$, the hyperbolic quantization condition \eqref{quantH} becomes
\begin{align}
	\frac{y^{2N}}{B} (B-y)\left(1-yB\right) = \frac{1}{y}\left(By-1\right)\frac{1}{By}\left(y-B\right),
\end{align}
which is {\em exactly} fulfilled only for $y=e^{q} = B$.
This means a single hyperbolic solution, falling off away from the right end, as choosing $c=0$ in \eqref{hypsol} is viable for $e^q=B$ (or $d=0$ and $-q$, which is the same thing, satisfying \eqref{quantH}. The energy of this state is $E = -2\cosh q = -B-\frac{1}{B} = -\frac{1+B^2}{B}$. Note that for $R=B$,
we can rewrite $H^{\textrm{bias}}_N$ \eqref{rewriteHbias} as $\left(1+B^2\right)\ii + BH^{\left(B,\frac{1}{B}\right)}_N$. The energy of this hyperbolic solution for \eqref{rewriteHbias} is thus exactly zero, $H^{\textrm{bias}}_N$ is frustration free and this eigenvector is its unique 0-energy ground state.

Furthermore, we can also express the quantization condition \eqref{quantG} for the goniometric solutions as
\begin{align}
	e^{i2Np}
	\left( B - e^{ip}\right)
			\frac{1}{B}\left(1-Be^{ip}\right)
	&= e^{-ip} \left(Be^{ip}-1\right)
		\frac{e^{-ip}}{B}  \left(e^{ip}-B\right).
\end{align}
We know that $p=0$ works only for $L=R=1$ and $p=\pi$ only for $L=R=-1$. Thus, we can get rid of the nonzero factors, and simplify the condition to $e^{i2(N+1)p}=1$, which means
\begin{align}
		p = \frac{k\pi}{N+1}, \quad k=1,\dots,N. \label{pcondition}
\end{align}
Note that for $B>1$, the $k=0$ solution doesn't exist. Altogether, we are getting $N+1$ solutions (1 hyperbolic, $N$ goniometric), as we should.
All the eigenvalues for $H^{\left(B,\frac{1}{B}\right)}_N$ are thus
\begin{align}
	E_0 = - \left(B+\frac{1}{B}\right), \qquad \textrm{and} \qquad
	E_k = - 2\cos \frac{k\pi}{N+1}, \quad k=1,\dots,N. \label{HwalkSpecialE}
\end{align}
The difference between the {\em highest} two eigenvalues of the $R=\frac{1}{L}=B$ special case of \eqref{HwalkLR} is thus $\oo{N^{-2}}$ (the difference between two cosines of nearby $k$ values).
What is more interesting, the gap of \eqref{HwalkLR}, the difference of its two lowest eigenvalues, is lower bounded by a constant for $B+\oOmega{1}>1$, as the only hyperbolic solution energy $-\left(B+\frac{1}{B}\right)$ is bounded away from $-2$, the lower bound on the energies of the goniometric solutions.

As we promised to show in the beginning of Section~\ref{sec:biasedwalk}, this translates easily to a constant gap for $H^{\textrm{bias}}_N$, as it is just a rescaled and shifted version of $H^{\left(B,\frac{1}{B}\right)}_N$.

We can also use the above result to investigate the spectrum of the Hamiltonian $H^{\left(2-2s,2s\right)}_N$ for $s$ close to $\frac{1}{2}$, i.e. the Hamiltonian $H^{\left(1-x,1+x\right)}_N$ for small $x=\frac{1}{2}\left(s-\frac{1}{2}\right)$. This will be useful when planning to use these Hamiltonians for adiabatic quantum computation.
The equation $LR=1$ is approximately fulfilled with error $\oo{x^2}$, and the argument leading to
\eqref{pcondition} is valid up to error $\oo{x^2}$ in $p$. This translates to eigenvalues
\begin{align}
	E_0 = - 2 - x^2 + \oo{x^3}, \qquad \textrm{and} \qquad
	E_k = - 2\cos \frac{k\pi}{N+1} + \oo{\frac{x^2}{N}+x^4}, \quad \textrm{for }k\ll N, \label{HwalkS1MS}
\end{align}
and a spectral gap with an x-dependent lower bound
\begin{align}
	\Delta \geq x^2 + \frac{\pi^2}{(N+1)^2} + \oo{\frac{x^2}{N}+x^3}.
\end{align}
This will be useful later in Section \ref{sec:adiabatic}, when we will employ Hamiltonians of the form
$H_{N}^{(2-2s,2s)}$ for adiabatic quantum computation.

\subsubsection{Unit-strength or no self-loops at the ends}
\label{sec:walk11}

Finally, let us finish with the analysis of a few special values for $L$ and $R$.

First, we look at the chain \eqref{HwalkLR} with $L = \frac{1}{R} = R = 1$, i.e. the Hamiltonian $H^{(1,1)}_{N}$.
As noted above, it has no hyperbolic (exponentially growing) solution.
However, the point $p=0$ in \eqref{pcondition} is now also available, producing the uniform superposition state. The simplest expression for all $N+1$ eigenvalues is then
\begin{align}
	E_k = - 2\cos \frac{k\pi}{N+1}, \quad k=0,\dots, N. \label{E11}
\end{align}
The smallest one is $E_0=-2$. The separation of momenta is at least $\frac{\pi}{N+1}$, so the low-lying as well as high-lying gap (top of the spectrum) of the Hamiltonian
$H^{(1,1)}_{N}$, as well as of the rescaled and shifted Hamiltonian \eqref{HLapends} is again $\oo{N^{-2}}$.
In \eqref{HLapLR}, we have seen that the Laplacian walk Hamiltonian $H_{N}^{\textrm{L}}$ is a simple shift of $H^{(1,1)}_{N}$. Thus, $H_{N}^{\textrm{L}}$ has ground state energy $E_0^{\textrm{L}}=0$ and a gap on the order of $\oo{N^{-2}}$.

Second, we can similarly analyze the Hamiltonian $H^{(-1,-1)}_{N}$.
It has no hyperbolic solutions. The momentum $p=\pi$ is a solution of \eqref{quantG},
and produces the uniformly alternating state. The simplest expression for all $N+1$ eigenvalues is then again \eqref{E11}, but this time for $k=1,\dots,N+1$.
The smallest one is $E_1=-2 + \oo{N^{-2}}$, and the largest is $E_{N+1}=2$, while the gap at the bottom and top of the spectrum has again size $\oo{N^{-2}}$.

Third, let us look at $H^{(1,0)}_N$. This case is important for the proof in Section~\ref{sec:KitaevGap}. Because $L,R \leq 1$, it has no hyperbolic solutions. Setting $R=0$ and $L=1$ and following the argument below \eqref{quantG} results in
$e^{i2Np} = -e^{-ip}e^{-2ip} $, meaning the solutions are
\begin{align}
	p = \frac{\pi(2k+1)}{2N+3}, \qquad k=0,\dots,N. \label{10sols}
\end{align}
The lowest possible $p$ is thus $p_0 = \frac{\pi}{2N+3}$,
the ground state energy is $-2+\oo{N^{-2}}$,
and the spacing in $p$ is $\oo{N^{-1}}$ so the gap is again $\oo{N^{-2}}$.
Next, plugging $p_0$ into \eqref{quantG2} results in $b=-ae^{2\pi i/(N+3)}$, which lets us compute the form of the ground state itself. Here we just present two observations. The normalization is $|a|=\oo{N^{-1/2}}$, the left end (where there is a self-loop) has amplitude
of magnitude $\oo{N^{-1/2}}$, and the right end (without a self-loop) has amplitude $\oo{N^{-3/2}}$.

Fourth, we finally look at $H^{(0,0)}_N$, again useful for the proof in Section~\ref{sec:KitaevGap}. Because $L,R<1$, it has no hyperbolic solutions. Setting $R=0$ and $L=0$ in \eqref{quantG}, using the argument just below that equation results in
$e^{i2Np} = e^{-4ip} $, meaning the solutions are
\begin{align}
	p = \frac{\pi k}{N+2}, \qquad k=1,\dots,N+1. \label{00sols}
\end{align}
The lowest possible $p$ is thus $p_1 = \frac{\pi}{N+2}$,
the ground state energy is $-2+\oo{N^{-2}}$, the spacing in $p$ is $\oo{N^{-1}}$ so the gap is again $\oo{N^{-2}}$.

\section{A new promise gap bound for Kitaev's QMA-complete Local Hamiltonian}
\label{sec:KitaevGap}

Recall from Section~\ref{sec:LH}, that Kitaev's Local Hamiltonian problem is QMA complete, if there is a promise that the lowest eigenvalue is below $E_a$ or above $E_b$, with a promise gap $E_b-E_a = \oOmega{N^{-3}}$. Here, we prove that the problem remains QMA complete even if the promise gap is $E_b-E_a = \oOmega{N^{-2}}$.
After finishing this proof, we have learned that this result has been also independently proved by Bausch and Crosson \cite{BauschCrossonGap},
by Markov chain mixing techniques. Our approach is different, as we rely on quantum walks and the results on their spectra that we have derived in Section~\ref{sec:biasedwalk}.

Kitaev \cite{KitaevBook} used a general geometric lemma for a sum of two positive semidefinite operators to bound the lowest eigenvalue of the Hamiltonian \eqref{HKitaev}. However, now that we understand biased quantum walks, we can calculate bounds on the eigenvalues directly, getting an improved promise bound gap.

\begin{theorem}[Kitaev's 3-local Hamiltonian problem with an improved promise bound]
\label{th:gap}
It is QMA-complete to determine whether the ground state energy of a Hamiltonian with $\poly(N)$ constant-norm, 3-local terms for an $N$-qubit system is $\geq E_1$ or $\leq E_0$ for $E_1-E_0 = \oOmega{N^{-2}}$.
\end{theorem}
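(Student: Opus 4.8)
The plan is to follow Kitaev's original reduction from an arbitrary QMA verification circuit $U = U_N \cdots U_1$ to the Local Hamiltonian \eqref{HKitaev}, but to replace Kitaev's geometric-lemma estimate of the ground-state energy split by a direct spectral computation using the quantum-walk results of Section~\ref{sec:biasedwalk}. The soundness/completeness gap for QMA is already a constant, so the only place the $N$-dependence enters is in how well $H_{\textrm{prop}} + H_{\textrm{init}} + H_{\textrm{out}}$ distinguishes ``a good history state exists'' from ``no good witness.'' First I would recall that $H_{\textrm{prop}}$ restricted to any invariant subspace $\mathcal{H}_\varphi$ (fixed initial data state $\ket{\varphi}$) is unitarily equivalent, via $W = \sum_t \ket{t}\bra{t}_{\textrm{clock}} \otimes (U_t \cdots U_1)$, to the Laplacian walk $H_N^{\textrm{L}} = 2\ii + H^{(1,1)}_N$ on the line of $N+1$ clock states. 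Its unique zero-energy ground state is the history state, and by \eqref{E11} its gap is $\oo{N^{-2}}$ — this is the clean, $N^{-2}$-scaled spectral input that makes the improvement possible.

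The key steps, in order: (1) In the \emph{yes} case, take the accepting witness $\ket{\varphi}$, form the history state $\ket{\psi_{\textrm{hist}}^\varphi}$; it has zero energy under $H_{\textrm{prop}}$ and zero under $H_{\textrm{init}}$ (ancillas initialized correctly), and energy at most $\tfrac{1}{N+1}(1 - p_{\textrm{acc}}) \le \tfrac{\epsilon}{N+1}$ under $H_{\textrm{out}}$, since only the $t=N$ clock slice contributes and it does so with weight $\tfrac{1}{N+1}$. The clock term is handled by working in the good subspace (or paying a constant penalty outside it). This gives $E_0 = \oOh{N^{-1}} \cdot \text{const}$, which we can push to $\oOh{N^{-2}}$ by standard padding of the circuit with identity gates to length $\Theta(N^2)$ if needed — or, better, by the following sharper argument that avoids padding. (2) In the \emph{no} case, I want to show the ground energy is $\oOmega{N^{-2}}$. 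Decompose $H_{\textrm{K}}$ on each invariant subspace $\mathcal{H}_\varphi$: after conjugating by $W$, $H_{\textrm{prop}}$ becomes $H_N^{\textrm{L}}$ with gap $c N^{-2}$ and ground state the uniform superposition, while $H_{\textrm{init}} + H_{\textrm{out}}$ becomes a sum of rank-one positive terms supported on the $t=0$ and $t=N$ clock slices that, on that uniform superposition, register a total penalty $\oOmega{N^{-1}}$ whenever no witness accepts with probability $> 1-\epsilon$ (the $t=N$ slice sees $\ket{0}\bra{0}_{\textrm{out}}$ with weight $\ge (1-\epsilon)/(N+1)$, or $H_{\textrm{init}}$ fires on the $t=0$ slice with weight $1/(N+1)$).

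The heart of the argument — and the step I expect to be the main obstacle — is combining these two facts into an $\oOmega{N^{-2}}$ lower bound on $\lambda_{\min}(H_{\textrm{prop}} + H_{\textrm{init}} + H_{\textrm{out}})$ per subspace. The naive union of ``gap $\gtrsim N^{-2}$'' and ``penalty on the ground state $\gtrsim N^{-1}$'' via Kitaev's geometric lemma for two PSD operators $A+B$ with $\lambda(A)=\lambda(B)=0$ gives only $2v^2$ where $v$ is the angle between kernels — and here one must check this yields $\oOmega{N^{-2}}$ rather than $\oOmega{N^{-3}}$. The improvement comes precisely from observing that the relevant penalty term is not generic: after conjugation it is essentially the boundary projector $-L\ket{0}\bra{0}$ or $-R\ket{N}\bra{N}$ type perturbation analyzed in Section~\ref{sec:lineLR}, so $H_{\textrm{prop}} + (\text{penalty})$ is (up to shift/scale) a walk $H^{(L,R)}_N$ with an endpoint self-loop of strength bounded away from $1$ from below — the cases $H^{(1,0)}_N$ and $H^{(0,0)}_N$ computed explicitly in Section~\ref{sec:walk11} are exactly the model situations. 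Those have ground energy $-2 + \oo{N^{-2}}$, i.e. a \emph{constant}-order shift up from $-2$ is \emph{not} what happens, but the shift is $\oOmega{N^{-2}}$ and, crucially, it is $\oOmega{N^{-2}}$ \emph{uniformly}, because the penalty is $\Theta(1/N)$ not $\Theta(1/N^2)$. So the strategy is: reduce to a tridiagonal walk with a boundary term, read off the ground energy from the quantization conditions \eqref{quantG}--\eqref{quantH} (and the explicit $R=0,1$ evaluations of $\arg v$ noted after \eqref{vwarg}), conclude $\lambda_{\min} \ge E_0 + \oOmega{N^{-2}}$ on every bad subspace, then note the bad subspaces span the whole space orthogonal to accepting history states, and finally observe $H_{\textrm{clock}}$ only adds energy. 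Locality and norm: the construction is 3-local with $\poly(N)$ terms of $O(1)$ norm by using the domain-wall clock of Section~\ref{sec:dw} (clock 3-local, gates 2-local, combined via the standard unary-clock encoding), so no separate work is needed there beyond citing the setup.
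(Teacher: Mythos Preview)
Your overall shape is right --- conjugate by $W$, reduce to line walks with endpoint terms, and read off the $\oOmega{N^{-2}}$ ground energy from the explicit $H^{(1,0)}_N$ and $H^{(0,0)}_N$ computations --- but there is a genuine gap in the decomposition step.

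The subspaces $\mathcal{H}_\varphi$ you propose are \emph{not} invariant under $H_{\textrm{init}} + H_{\textrm{out}}$. After conjugating by $W$, the penalty becomes $\ket{0}\bra{0}_{\textrm{clock}}\otimes P + \ket{N}\bra{N}_{\textrm{clock}}\otimes Q$ with $P = \ii - \ket{0\cdots 0}\bra{0\cdots 0}_{\textrm{anc}}$ and $Q = U^\dagger \Pi^0_{\textrm{out}} U$. These two projectors on the data register do not commute, so you cannot simultaneously diagonalize them and get a clean endpoint self-loop of scalar strength on each line. The paper handles exactly this by invoking Jordan's lemma for the pair $(P,Q)$: the data space splits into 1D and 2D blocks invariant under both. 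On the 1D blocks you do get precisely the walks $2\ii + H^{(1,0)}_N$, $2\ii + H^{(0,1)}_N$, $2\ii + H^{(0,0)}_N$ that you cite, and your argument goes through. But the 2D blocks are where the real work is: there the Hamiltonian becomes \emph{two} copies of $2\ii + H^{(1,0)}_N$, coupled at the $t=N$ endpoints by a perturbation of norm $\sqrt{p_v}\leq\sqrt{\epsilon}$ (here $p_v$ is the acceptance probability on that block). One then needs a perturbation argument using that the ground-state amplitude of $H^{(1,0)}_N$ at the loop-free endpoint is only $\oOh{N^{-3/2}}$, so the coupling shifts the energy by at most $\oOh{\sqrt{\epsilon}\,N^{-3/2}}$. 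This is why the verifier must first be amplified to soundness $\epsilon = \oOh{N^{-2}}$ --- not the padding you suggest, which would enlarge $N$ and defeat the purpose. Without the Jordan decomposition and the 2D-block analysis, you have no control over the cross-terms between data states that $Q$ mixes.

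A minor point: the domain-wall clock plus 2-qubit gates is 5-local, not 3-local; the 3-local claim in the theorem statement needs a separate construction (the paper cites \cite{new3local}) with no bad clock transitions, and the proof explicitly notes that it fails for constructions like \cite{KR03,KKR06} that rely on clock-checking plus a projection lemma.
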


Our proof below is based on the original 5-local Hamiltonian construction \cite{KitaevBook}. However, it also works for other constructions with {\em no inherent bad clock transitions} in the Hamiltonian that need to be energetically punished. In particular, it works without change for the new 3-local Hamiltonian construction of Mozes \& Nagaj \cite{new3local}, resulting in Theorem~\ref{th:gap} stated above.
Note though, that it does not work for the constructions \cite{KR03}, \cite{KKR06}, \cite{OliveiraTerhal} or \cite{8state}, which all include bad clock transitions that need to be dealt with by a clock-checking Hamiltonian and a projection lemma that then implies a smaller eigenvalue as well as promise gap.

The proof that the 3-local Hamiltonian with a $\oOmega{N^{-2}}$ promise gap is still QMA-complete is a straightforward plugin (just a stronger analysis) within Kitaev's proof of QMA-hardness of 5-LH. There, a Hamiltonian is chosen so that its ground state is related to the history state of a quantum verification circuit with $N$ gates that has completeness $1-\epsilon$ and soudness $\epsilon$. In our proof, we will require a small $\epsilon=\oOh{N^{-2}}$, easily achievable by amplification. When starting with a verifier circuit with constant soudness, we can obtain an amplified verifier circuit with $\epsilon = \oOh{N^{-\alpha}}$ by using at most $\oOh{\log N}$ copies of the circuit \cite{KitaevBook}, or by a same width circuit with $\oOh{N\log N}$ gates of the witness-reusing alternating-measurement method by Marriott-Watrous \cite{MarriottWatrous} or amplification by phase estimation \cite{fastQMAamp}.

What is new in our proof is the lower bound on the ground state energy of $H$ in the {\em no} instances, i.e. when the original circuit accepts no state with probability more than $\epsilon$. Later, we will also sketch how to get an upper bound on the ground state energy when there exists a witness accepted by $U$ with probability at least $1-\epsilon$. Together, they will mean a relaxation on the conditions on the promise gap for the Local Hamiltonian problem.

\begin{proof}
Let us then look at the {\em no} case.
The Hamiltonian \eqref{HKitaev} is a sum of four terms: $H_{\textrm{prop}} + H_{\textrm{clock}} + H_{\textrm{init}} + H_{\textrm{out}}$. We choose to look at the standard 5-local implementation of the domain-wall clock with unique forward and backward clock transitions. Therefore, the Hilbert space
splits into the invariant {\em good} subspace with proper single domain wall clock states,
and another invariant subspace with bad clock states. The states in the bad clock subspace have at least one bad domain wall ``$01$'' in the clock register, and thus have energy at least a constant, coming from the term $H_{\textrm{clock}}$. However, we know that the ground state energy of an ansatz (any history state with proper ancilla initialization) is lower than $\oo{N^{-1}}$, because it is ``detected'' only by the readout term, and that part of the state has amplitude $\oo{N^{-1/2}}$
Thus, the actual ground state must come from a state in the {\em good} subspace.

If we had no endpoint projectors (checking the ancilla initialization and the readout), the structure of the good subspace would be simple, given by how $H_{\textrm{prop}}$ connects states to each other -- as quantum Laplacian walks on a line of the type \eqref{HLap}. However, with the terms $H_{\textrm{init}}$ and $H_{\textrm{out}}$, we need to work a bit harder.
First, we will find a convenient basis for the Hilbert space of the data register, thanks to Jordan's lemma about two projectors. Second, we will append the clock register, find a basis for the whole good subspace,
and show how our Hamiltonian has a simple form. Third, we will analyze this simplified Hamiltonian
and prove a lower bound on the ground state energy in the {\em no} case (and an upper bound in the {\em yes} case) of the original QMA-complete problem instance.

The initialization term $H_{\textrm{init}}$ in Kitaev's Hamiltonian \eqref{HKitaev} is a sum of projectors on the $\ket{1}$ states of the ancillas in the beginning of the computation, under which the states with more badly initialized ancillas have higher energies. We can only decrease the ground state energy if we instead choose an initialization term that is itself a projector
\begin{align}
	H_{\textrm{init}}' &= \ket{0}\bra{0}_{\textrm{clock}} \otimes P_{\textrm{data}}, \label{Hinitmodified}\\
	P &= \ii - \ket{0\cdots 0}\bra{0\cdots 0}_{\textrm{ancillas}} \nonumber,
\end{align}
where $P$ is a projector acting on the data register only, giving energy $1$ to all states that do not have all ancillas $\ket{0}$ at the start of the computation. We choose to use this modified version because dealing with the projector $H_{\textrm{init}}'$ is simpler than dealing with the positive semidefinite (sum of projectors) $H_{\textrm{init}}$.

Let us consider another projector $Q$ acting on the data register, related to the term $H_{\textrm{out}}$ in \eqref{HKitaev}:
\begin{align}
	H_{\textrm{out}} & = \ket{N}\bra{N}_{\textrm{clock}} \otimes \left(\ii \otimes \ket{0}\bra{0}_{\textrm{out}}\right)_\textrm{data} =  \ket{N}\bra{N}_{\textrm{clock}} \otimes\Pi_{\textrm{out}}^{0}, \label{HoutfromQ}\\
	Q &= U^\dagger \Pi_{\textrm{out}}^{0} U.
\end{align}
The projector $P$ keeps states with nonzero ancillas intact, while the projector $Q$ projects on non-accepted states. These two projectors give a lot of structure to the Hilbert space of the data register. Furthermore, they let us investigate invariant subspaces of the whole Hilbert space of the clock and data registers, and the form of the Hamiltonian there will be amenable to analysis.

\begin{figure}%
\begin{center}
\includegraphics[width=14cm]{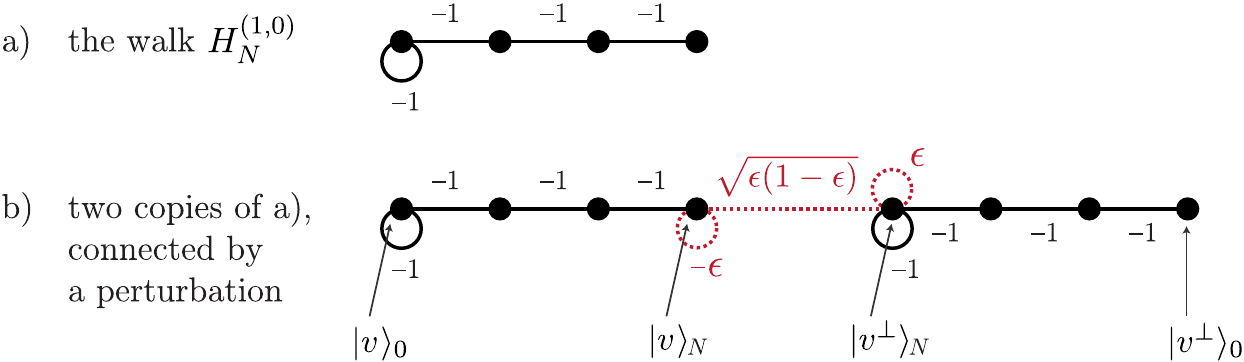}%
\caption{The special cases of walks with endpoint projectors appearing in the proof of the promise gap lower bound. a) The walk $H^{(1,0)}_N$ appears in cases 2,3 of the 1D invariant subspaces.
b) In the 2D invariant subspaces, we get two copies of the walk $H^{(1,0)}_N$, connected by a perturbation \eqref{perturbationsigma}.
}%
\label{fig:H10walk}%
\end{center}
\end{figure}

According to Jordan's lemma (see e.g. \cite{fastQMAamp}) for a pair of projectors $P,Q$, a Hilbert space can be split into 1D and 2D subspaces invariant under $P$ and $Q$.
First, let us look at the simpler, 1D subspaces, where the vectors are simultaneous eigenvectors of $P$ and $Q$.
Because $P,Q$ have eigenvalues $0,1$, there are four possibilities:
\begin{enumerate}
	\item The vector $\ket{u}$ obeys $P\ket{u} = Q\ket{u} = 0$. In the {\em no} instance of the problem, there is no such state, both properly initialized and fully accepted.
	\item The vector $\ket{u}$ obeys $P\ket{u} = 0$ and $Q\ket{u} = \ket{u}$. It is a state with proper ancillas, that is fully rejected by the circuit $U$. Such states might exist. Let us take this state $\ket{u}$ of the data register, append the clock register, and look at the subspace spanned by the basis
	\begin{align}
		\ket{u}_t = \ket{t}_{\textrm{clock}} \otimes U_t \dots U_1 \ket{u}_{\textrm{data}},
		\qquad t \in \{0,\dots,N\},\label{utbasis}
	\end{align}
where the original verifier circuit can be decomposed as $U = U_t \dots U_1$.
It is clear that this subspace is invariant under the terms $H_{\textrm{clock}}$, $H_{\textrm{prop}}$ and $H_{\textrm{init}}'$ of Kitaev's Hamiltonian. Note that $H_{\textrm{out}}$ also leaves this subspace invariant, as
\begin{align}
	H_{\textrm{out}} \ket{u}_N &= H_{\textrm{out}} \ket{N} \otimes U \ket{u}
	= \ket{N} \otimes \Pi_{\textrm{out}}^{0} U \ket{u}
	= \ket{N} \otimes U U^\dagger \Pi_{\textrm{out}}^{0} U \ket{u} \nonumber\\
	&= \ket{N} \otimes U Q \ket{u} = \ket{N} \otimes U \ket{u} = \ket{u}_N,
\end{align}
because the vector $\ket{u}$ obeys $Q\ket{u} = \ket{u}$. Thus, we have an invariant subspace spanned by the basis \eqref{utbasis}. Let us look at what form the Hamiltonian gets on the line of states \eqref{utbasis}. We have $H_{\textrm{clock}}\ket{u}_t=0$, as we are here talking only about proper clock states. The term $H_{\textrm{prop}}$ becomes a Laplacian-type walk \eqref{HLapends} on a line of the states $\ket{u}_t$. Next, we have $H_{\textrm{init}} \ket{u}_1 = 0$, because $P\ket{u}=0$. Finally, because the state $\ket{u}$ is fully rejected by the verifier, $H_{\textrm{out}}\ket{u}_N = \ket{u}_N$ translates to an extra right endpoint projector on the line. Altogether, in the basis \eqref{utbasis} we get the Hamiltonian $H_N^{\textrm{L}} + \ket{N}\bra{N} = 2\ii + H_{N}^{(1,0)}$ depicted in Figure~\ref{fig:H10walk}a.
Using the ground state energy of $H_{N}^{(1,0)}$ from Section~\ref{sec:walk11} below \eqref{10sols}, we get a lower bound on the energy of the shifted walk: $E \geq 2 - 2\cos\frac{\pi}{2N+3}= \frac{\pi^2}{4N^2} - \oOh{N^{-3}} =  \Omega(N^{-2})$.

	\item The vector $\ket{u}$ has bad ancillas and is fully accepted, i.e. $P\ket{u}=\ket{u}$ and $Q\ket{u}=0$. It again defines an invariant subspace of the data+clock registers spanned by the basis \eqref{utbasis} (a line of states). Similarly to the previous case, on this line, the Hamiltonian becomes a Laplacian walk with an extra endpoint projector on the left end: $H_N^{\textrm{L}} + \ket{0}\bra{0}$, as we now have $H_{\textrm{init}}'\ket{u}_0 = \ket{0}\otimes P\ket{u} = \ket{u}_0$, and $H_{\textrm{out}}'\ket{u}_N = 0$.
	The ground state energy of this Hamiltonian $2\ii + H_{N}^{(0,1)}$ is the same as in case 2, lower bounded by $\Omega(N^{-2})$.

	\item The vector $\ket{u}$ has nonzero ancillas and is also fully rejected, i.e. $P\ket{u}=\ket{u}$ and $Q\ket{u}=\ket{u}$.
	Yet again, it defines an invariant subspace of the whole Hilbert space of the clock and data registers, spanned by the basis \eqref{utbasis}.
	In this basis, the Hamiltonian is this time a Laplacian-type walk on a line with extra added endpoint projectors at both ends. Adding $\ket{0}\bra{0}$ and $\ket{N}\bra{N}$ to \eqref{HLapLR} gives us $2\ii + H_{N}^{(0,0)}$. Using the ground state energy of $H_{N}^{(0,0)}$ from Section~\ref{sec:walk11} below \eqref{00sols},
	we find a lower bound on the energy here: $E\geq 2 - 2\cos \frac{\pi}{N+2}
	= \frac{\pi^2}{N^2} - \oOh{N^{-3}}$, which is again $\Omega(N^{-2})$.
\end{enumerate}
We conclude that case 1 can't happen, while for large enough $N$, cases 2-4 give us a lower bound on the ground state energy $E\geq \frac{5}{2N^2}= \Omega(N^{-2})$.

Let us now deal with the 2D invariant subspaces that exist because the projectors $P$ and $Q$ are not orthogonal. For each such subspace $\mathcal{H}$, we can write down a basis $\{\ket{v},\ket{v^\perp}\}$ made from eigenvectors of $P$, with $P\ket{v} = 0$ and $P\ket{v^\perp} = \ket{v^\perp}$.
Note that
\begin{align}
	\bra{v} Q\ket{v}
	&= \bra{v} U^\dagger \Pi^0_{\textrm{out}} U \ket{v}
  = 1-p_v, \label{pv}
\end{align}
where $p_v\leq \epsilon$ is the acceptance probability of the original circuit $U$ for the state $\ket{v}$ (recall that it has properly initialized ancillas).
Because Jordan's lemma ensures the subspace is invariant under $P,Q$, the unnormalized states
$Q \ket{v}$ and $Q\ket{v^\perp}$ must also belong to the subspace $\mathcal{H}$.
In particular, using \eqref{pv} we can write
\begin{align}
	Q\ket{v} &= \left(1-p_v\right) \ket{v} + a \ket{v^\perp}, \\
	|a|^2 &= \bra{v}Q\ket{v} - 2\left(1-p_v\right)\bra{v}Q\ket{v} + \left(1-p_v\right)^2 \braket{v}{v}
	= p_v\left(1-p_v\right),
\end{align}
and choose the phase of $\ket{v^\perp}$ so that $a=\sqrt{p_v\left(1-p_v\right)}$ is real. We thus get
\begin{align}
	Q\ket{v} &= \left(1-p_v\right) \ket{v} + \sqrt{p_v\left(1-p_v\right)} \ket{v^\perp}, \\
	Q \ket{v^\perp} &= Q \frac{1}{\sqrt{p_v\left(1-p_v\right)}} \left(Q\ket{v} - \left(1-p_v\right) \ket{v}\right)
	= \sqrt{\frac{p_v}{1-p_v}}Q\ket{v} \\
	&= \sqrt{p_v\left(1-p_v\right)} \ket{v} + p_v \ket{v^\perp}.
\end{align}
In the basis $\{\ket{v},\ket{v^\perp}\}$ the operator $Q = U^\dagger\Pi^0_{\textrm{out}}U$ thus reads
\begin{align}
	Q = \left[
	\begin{array}{cc}
		1-p_v & \sqrt{p_v\left(1-p_v\right)} \\
		\sqrt{p_v\left(1-p_v\right)} & p_v
	\end{array}
	\right]
	= \ket{v}\bra{v} - \sqrt{p_v}\, \sigma_{p_v}.
	\label{Qpert}
\end{align}
It has two terms. The first is the projector $\ket{v}\bra{v}$.
The second term is a $\sqrt{p_v}$ multiple of a Pauli matrix $\sigma = \sqrt{p_v} Z - \sqrt{1-p_v} X$. We will think of it as a small perturbation (we know that $p_v \leq \epsilon$ is small)
\begin{align}
	V = \sqrt{\epsilon}\,\sigma_{\epsilon} =
	\sqrt{\epsilon} \left(\sqrt{\epsilon}Z - \sqrt{1-\epsilon}X\right).
	\label{perturbationsigma}
\end{align}

Similarly to what we did for the 1D subspaces, let us now append a clock register to our states $\ket{v}$ and $\ket{v^\perp}$
and investigate the subspace spanned by the basis states $\ket{v}_t$ and $\ket{v^\perp}_t$, defined by \eqref{utbasis}.
The terms of Kitaev's Hamiltonian \eqref{HKitaev} take a nice form within this subspace.
First, the clock Hamiltonian is satisfied, and does not act here, as the clock register has proper states. Second, the propagation Hamiltonian creates a Laplacian type quantum walk $H^{\textrm{L}}_N = 2\ii + H^{(1,1)}_{N}$ on the states $\ket{v}_t$ for $t=0,\dots,N$, as well as another
walk $2\ii + H^{(1,1)}_{N}$ on the states $\ket{v^\perp}_t$ for $t=N,\dots,0$.
Third, our modified initialization term \eqref{Hinitmodified} adds a projector onto the state $\ket{v^\perp}_0= \ket{0}\otimes \ket{v^\perp}$ (this turns the second walk into $2\ii + H^{(1,0)}_{N}$).
Fourth, according to \eqref{Qpert}, the readout term adds a projector onto the state $\ket{v}_N= \ket{N}\otimes \ket{v}$ (this turns the first walk into $2\ii + H^{(1,0)}_{N}$), as well as a perturbation term $V$ \eqref{perturbationsigma}
on the states $\ket{v}_N$ and $\ket{v^\perp}_N$.
Altogether, in this subspace, up to an overall shift by $2\ii$, we get two weakly coupled quantum walks on a line, with extra endpoint projectors, depicted in Figure~\ref{fig:H10walk}b.

The perturbation $V$ has norm $\sqrt{\epsilon}$, so
elementary perturbation theory then tells us that the the ground state energy
of Kitaev's Hamiltonian in this subspace changes by at most $\oo{\sqrt{\epsilon}}$.
In practice, the situation is even better because of the form of $V$ as well as the unperturbed ground states. The actual decrease in energy is $- \sqrt{\epsilon} \max_{\ket{\phi}} \bra{\phi}\sigma_{\epsilon}\ket{\phi}$ for $\ket{\phi}$ a ground state of the unperturbed Hamiltonian made from two copies of $H^{(1,0)}_N$.
The ground state subspace of $H^{(1,0)}_N\oplus H^{(1,0)}_N$ is doubly degenerate,
Because $\sigma_\epsilon$ is close to the Pauli $X$ up to error $\sqrt{\epsilon}$, we can maximize $\bra{\phi}\sigma_{\epsilon}\ket{\phi}$ by choosing $\ket{\phi}$ so that the amplitudes at $\ket{v}_N$ and $\ket{v^\perp}_N$ are the same. In Section~\ref{sec:walk11} below \eqref{10sols} we proved that the amplitudes of the ground state at the ends of one $H^{(1,0)}_N$ chain are $\oo{N^{-3/2}}$ and $\oo{N^{-1/2}}$.
The normalized  combination of such eigenvectors on the two copies of the walk maximizing $\bra{\phi}\sigma_\epsilon\ket{\phi}$ then has amplitude $\oOh{N^{-3/2}}$ at the connected endpoints (see Figure~\ref{fig:H10walk}b). This gives us $\bra{\phi}\sigma_{\epsilon}\ket{\phi} = \oOh{N^{-3/2}}$.
The perturbation in energy is thus at most $-\oOh{\epsilon^{1/2}N^{-3/2}} + \oOh{\epsilon}$
It is then enough to get an $\Omega(N^{-2})$ lower bound on the ground state energy by choosing e.g. $\epsilon = N^{-2}$.
We can make this choice by amplifying the original circuit.

Finally, in the {\em yes} case, the history state for a good witness accepted with probability $\geq 1-\epsilon$ has energy at most $\frac{\epsilon}{N}$, for our choice $\epsilon = \frac{1}{N^2}$.
Altogether, the lowest eigenvalue in the {\em yes} and {\em no} cases are
\begin{align}
		E_{yes} \leq \frac{\epsilon}{N} \leq \frac{1}{N^3}, \qquad E_{no} \geq \frac{const.}{N^2}.
\end{align}
Thus for a circuit amplified to soundness at most $\epsilon = \oOh{N^{-2}}$ and completeness at least $1-\epsilon$, we obtain a new promise gap $E_{no}-E_{yes} = \oOmega{N^{-2}}$.
\end{proof}
Note that Bausch \& Crosson have independently found a comparable (tight $\Theta(N^{-2})$ promise gap) result in \cite{BauschCrossonGap}.

\subsection{Universal adiabatic computation with a Laplacian and endpoint projectors}
\label{sec:adiabatic}

We can also use Kitaev's Hamiltonian to adiabatically prepare the state $U\ket{0\cdots 0}$, i.e. to perform universal quantum computation \cite{UniversalAdiabaticCircuitHamiltonian, AdiabaticQCEquivStandardQC}.
Here, we present another such scheme and find how its required runtime scales with $N$, the number of gates in the circuit $U$. It is a straightforward application of what we have learned about lower bounds on the gaps of the Hamiltonians that involve a walk on a line and endpoint projectors.
The goal is to investigate whether our knowledge of the gap could help us speed up when far from the small-gap region, while focusing on going slowly when near it, and thus cut down the required runtime of the preparation procedure. However, it turns out that a straightforward ``local adiabatic evolution'' approach of \cite{RolandCerf}
does not help us here because of the particular way the gap closes down. The gap here doesn't grow with an $N$ prefactor when moving around its minimum, while it did so for the Grover problem that Roland and Cerf \cite{RolandCerf} were investigating.

\begin{figure}%
\begin{center}
\includegraphics[width=9cm]{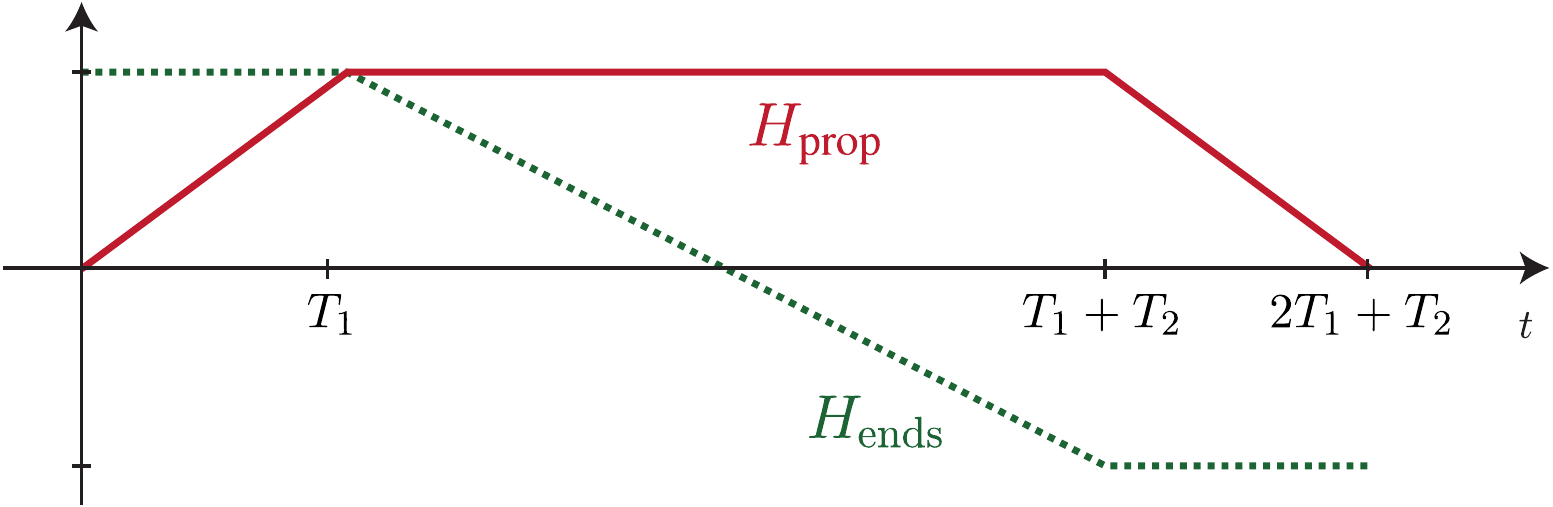}%
\caption{The schedule for universal quantum computation by adiabatic preparation using Kitaev's propagation Hamiltonian, and projectors on the initial and final states of the clock register. The gap is constant for the first and third sections, while in the middle section it becomes $\oo{N^{-2}+x^2}$ around $t = \frac{1}{2}(T_1+T_2) +x$.
}%
\label{fig:3wayadiabatic}%
\end{center}
\end{figure}

We assume $H_{\textrm{clock}} + H_{\textrm{init}}$ are always on, ensuring our playground (the low energy subspace) has proper clock register states, and that when the clock register reads $\ket{0}$, the data register is properly initialized to $\ket{0\cdots 0}$.
We propose a symmetric three-way adiabatic schedule, illustrated in Figure~\ref{fig:3wayadiabatic}, using Kitaev's propagation, clock and initialization terms \eqref{HKitaev}.
\begin{enumerate}
\item Start with
\begin{align}
	H(0) = H_{\textrm{clock}}+ H_{\textrm{init}} + H_{\textrm{ends}},
\end{align}
where $H_{\textrm{ends}}= - \ket{0}\bra{0}_{\textrm{clock}} + \ket{N}\bra{N}_{\textrm{clock}}$ prefers the clock state $\ket{0}$. There is a unique ground state of $H(0)$: the initial state $\ket{0}_\textrm{clock}\otimes \ket{0\cdots 0}_{\textrm{data}}$.
\item The first section takes constant time $T_1$. We turn on the propagation Hamiltonian as
\begin{align}
		H(t) = H_{\textrm{clock}} + H_{\textrm{init}} + H_{\textrm{ends}} + t H_{\textrm{prop}},
		\qquad 0\leq t \leq T_1.
\end{align}
\item The second section takes time $T_2$ scaling as $\oo{\epsilon^{-1}N^6}$. We flip the sign of $H_{\textrm{ends}}$
as
\begin{align}
		H(t) = H_{\textrm{clock}} + H_{\textrm{init}} + \left(1-2s(t)\right) H_{\textrm{ends}}
		+ H_{\textrm{prop}}, \qquad T_1\leq t \leq T_1+T_2,
\end{align}
with a monotonous parametrization $s(t)$, obeying $s(T_1)=0$ and $s(T_1+T_2)=1$.
\item Finally, again in constant time $T_1$, we turn off the propagation Hamiltonian as
\begin{align}
		H(t) = H_{\textrm{clock}} + H_{\textrm{init}} - H_{\textrm{ends}}+ (2T_1+T_2-t) H_{\textrm{prop}},
		\qquad T_1+T_2 \leq t \leq 2T_1+T_2.
\end{align}
\item We end with the Hamiltonian
\begin{align}
		H\left(2T_1+T_2\right) = H_{\textrm{clock}} + H_{\textrm{init}} - H_{\textrm{ends}},
\end{align}
at time $2T_1+T_2 = \oo{\epsilon^{-1}N^6}$. We claim the evolved initial state $\ket{0}_\textrm{clock}\otimes \ket{0\cdots 0}$ will be $\epsilon$ close to the desired state
$\ket{N}_\textrm{clock}\otimes U \ket{\phi}$.
\end{enumerate}

Let us analyze the gaps of these Hamiltonians, and then use the adiabatic theorem (Theorem 3 of \cite{JansenRuskaiSeiler}) to show that our preparation procedure works as promised.

Because of our choice of initial state, and having $H_{\textrm{clock}}$ and $H_{\textrm{init}}$ always on, the states that our Hamiltonians could possibly arrive at live within the subspace spanned by
\begin{align}
	\ket{\phi}_t = \ket{t}_{\textrm{clock}} \otimes U_t \dots U_1 \ket{0\cdots 0}. \label{abasis}
\end{align}
Let us express the Hamiltonians of our procedure in this basis. They are rather simple:
\begin{align}
	H_{\textrm{ends}} &= -\ket{\phi}_0\bra{\phi}_0 + \ket{\phi}_N\bra{\phi}_N,\\
	H_{\textrm{prop}} &= 2\ii + H_N^{(1,1)}.
\end{align}
In the first section, the relevant part of the Hamiltonian is $H_{\textrm{start}} + t H_{\textrm{prop}}$, a shifted and rescaled Laplacian walk with a varying left endpoint projector.
For $t=0$, the gap is $-2$, a constant, while for $t>0$ we have
\begin{align}
	H(t) = t \left( 2\ii + H_N^{(1+t^{-1},1-t^{-1})} \right). \label{Hsec1}
\end{align}
Recall from \eqref{Rsol}, that for $R<1$ and $L>1$, there is only one hyperbolic eigenstate of $2\ii+H_N^{(L,R)}$: the groundstate with energy near $2-L-1/L
= - (L-1)^2/L$. For \eqref{Hsec1}, it translates to
ground state energy
$-\frac{1}{t+1} \leq -\frac{1}{2}$.
Because the first excited state of \eqref{Hsec1} is a goniometric state, it has energy $\geq 0$.
Therefore, the gap of this Hamiltonian is at least a constant $\frac{1}{2}$.
The same holds for the third section in the schedule, where after the transformation $t' = 2T_1+T_2-t$, the Hamiltonian becomes
$H(t') = t' \left( 2\ii + H_N^{(1+t'^{-1},1+t'^{-1})} \right)$.

The middle section in Figure~\ref{fig:3wayadiabatic} is more interesting. There we have the Hamiltonian $2\ii + H^{(1,1)}_N - (1-2s) \ket{\phi}_0\bra{\phi}_0 + (1-2s) \ket{\phi}_N\bra{\phi}_N$, which can be rewritten in a simplified way as $2\ii + H^{(2-2s,2s)}_N$.
We have seen this Hamiltonian in Section~\ref{sec:LR}, and proven that it has a gap that is lower bounded by $\frac{\pi^2}{(N+1)^2}+ x^2 + \oo{x^2 N^{-1}}$, where $x=(2s-1)/4$.
Therefore, we can see the gap is smallest at $x=0$, i.e. at $s=1/2$, has magnitude $\oOmega{N^{-2}}$, and grows quadratically as $x^2$ when going away from this point.
We can thus straightforwardly apply Theorem 3 of \cite{JansenRuskaiSeiler}.
In this case, with a linear schedule, $T_2 = \oo{\epsilon^{-1}N^6}$
is surely enough for the final state to be $\epsilon$ close to the ground state of the final Hamiltonian. Following the local adiabatic evolution approach of \cite{RolandCerf}, which uses a specific slowdown that takes into account the gap dependence near $x=0$  could result in better scaling in $N$. However, this is not straightforward here. The calculations involve integrating the inverse cube or square of the gap (depending on the adiabatic theorem used). Because the gap dependence on $x$
in \cite{RolandCerf} was $\Delta(x) = \Delta_{\textrm{min}} + 2\Delta_{\textrm{min}}^{-1} x^2$, this has lead to improvement. This is not our case, as we have $\Delta(x) = \Delta_{\textrm{min}} + x^2$.

\section{Doing nothing (efficiently) can improve a computation}
\label{sec:idling}

In this Section, we will look at how to modify the unary clock construction
to achieve a high success probability for finding the computation done for Feynman's computer, and large overlaps of the ground state with the finished computation for Kitaev's Hamiltonian. Most importantly, our methods require only a few additional qubits (sublinear in the number of gates).

To compute with Feynman's computer, one needs to measure the clock register (pointer particle position). There is a chance to find it at the end of the computation, where all $N$ gates of the circuit have been performed. This is one of the model's drawbacks, as the probability of success is an inverse-polynomial in $N$.
One can choose a random time to measure, as Cesaro-mixing \cite{QWalksActaPhysica} guarantees the average time the computer spends in the final state is proportional to $\frac{1}{N}$ of the total time we run the computer.
Instead of running the computer for a randomly chosen reasonably long time, one can try to look at particular evolution times when the computation is more likely to be done, or to involve other tricks \cite{DeFalcoTamascelli}.

Another straightforward approach is to extend the quantum circuit with $N$ gates to $N+A$ gates, choosing to ``do nothing'' with the data for $A$ steps at the end of the computation. When we use gates $U_t = \ii$ for $N<t\leq N+A$, the fraction of states with the computation done becomes $\frac{1+A}{N+1+A}$. Moreover, for clock transitions with $t\geq N$, the data register remains unchanged, and so the required interactions involve only the clock register. Thus, all we need to do is to increase the number of possible clock states in the clock register.

One way to do this is to make the unary clock larger, adding $A$ clock qubits. In practice, this means we can tune the probability of finding the computation done as close to 1 as we want. What is the price we pay for this? First, the system size becomes $N+1+A$.
For example, we need $A=99 N$ extra clock qubits to guarantee a 99\% probability of success, which might be too costly. Second, the spectral as well as the promise gap of the Hamiltonian (see the calculations in the previous Section) system closes quadratically with the clock register size as $\oo{(N+A)^{-2}}$. This is bad news, if we want our computation to be resistant to noise.
Our answer to this problem are two solutions that require few additional clock qubits.
We believe these constructions will find their applications in universality as well as computational hardness results.

First, we present the {\em idling chain} construction in Section~\ref{sec:idling}, adding only a logarithmic number of clock qubits and their local interactions. It is designed for complexity applications, introducing a large overlap of the ground state with the finished computation, without a large increase in system size, without using large norm projectors, and without modifying the gap significantly. Note that it is not suitable for computation with Feynman's computer, because of reflection issues -- the time evolution of a computation does not smoothly transition between the unary clock and the idling clock.

The second, {\em cogwheel} construction in Section~\ref{sec:cog} is better suited for dynamical (Feynman computer) applications. It requires $2\sqrt{N+1+A}$ total clock qubits for all of the $N+1+A$ available clock states. The downside is the gap scaling as $\oo{(N+A)^{-2}}$, while clock qubit can be involved in up to $\sqrt{N+1+A}$, 3-local interaction terms. This model can be viewed as a combination of several wheels, where a full turn of one cogwheel advances the next wheel by one step.

\subsection{Idling the engine}

We now present a Hamiltonian whose ground state involves a uniform superposition over many clock states while leaving the data alone.
Our {\em idling chain} construction doesn't require a large number of extra clock qubits, while it can greatly increase the overlap of the ground state of Kitaev's Hamiltonian with the finished calculation (in the data register), while not decreasing the gap significantly. The detailed statement of the construction's properties is the following Theorem~\ref{th:idling}:
\begin{theorem}
\label{th:idling}
Consider the 4-local Hamiltonian for a unary clock with $N$ qubits connected to an idling chain with $C$ extra qubits and $C$ idling qubits. Let $A = 2^{C+1}-2$ and $z = (A+1)/N$.
\begin{enumerate}
\item It has a unique, zero-energy ground state $\ket{\psi}$: the uniform superposition of $N+1+A$ legal clock states, labeled by computational basis states $\ket{u}\ket{e}\ket{i}$ denoting the state of their unary, extra and idling qubits.
\item The expectation value of the projector $\Pi_{\textrm{done}} = \ket{1\cdots 1}\bra{1\cdots 1}\otimes \ii \otimes \ii$ in the ground state is
$\bra{\psi}\Pi_{\textrm{done}}\ket{\psi}=\frac{1+A}{N+1+A} = \frac{z}{z+1}$.
We thus call $\ket{\psi}$ the {\em amplified} history state.
\item The eigenvalue gap of the Hamiltonian is asymptotically lower bounded by
$\oOmega{N^{-2}}$ for $z=\poly(N)$.
\end{enumerate}
\end{theorem}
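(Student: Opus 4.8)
The plan is to realize the idling-chain Hamiltonian, restricted to the relevant invariant ``good'' subspace, as a quantum walk on a line of $N+1+A$ vertices with at most unit-strength endpoint self-loops, and then apply the $\oo{N^{-2}}$ gap bound already established in Section~\ref{sec:walk11} for $H^{(L,R)}_M$ with $L,R\leq 1$. First I would identify the invariant subspace: since the Hamiltonian is a sum of hopping/projector terms that preserve the number of excitations (unary domain wall, extra-qubit and idling-qubit occupations), the Hilbert space splits into a ``good'' subspace spanned by the $N+1+A$ legal clock states $\ket{u}\ket{e}\ket{i}$ of part~1, plus ``bad'' subspaces on which some clock-checking projector term fires, contributing at least a constant to the energy. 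So for large $N$ the low-lying spectrum lives entirely in the good subspace, and it suffices to lower bound the gap there.

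Next I would write the restriction of the Hamiltonian to the good subspace. Because each legal clock state has a unique forward and backward transition, $H|_{\mathrm{good}}$ is tridiagonal in the ordered basis $\ket{0},\dots,\ket{N+A}$ of legal clock states, and each elementary term is $(\ket{m}-\ket{m+1})(\bra{m}-\bra{m+1})$ — i.e. a Laplacian walk — since all hopping terms come in the projector (frustration-free) form guaranteeing the uniform superposition is annihilated (part~1). Using the identity \eqref{HLapends}, this gives $H|_{\mathrm{good}} = 2\ii - \ket{0}\bra{0} - \ket{N+A}\bra{N+A} + H^{\mathrm{walk}}_{N+A}$, which is exactly $2\ii + H^{(1,1)}_{N+A}$ up to possibly weaker endpoint loops if the physical boundary clock-check terms contribute extra endpoint projectors (which only shift $L,R$ toward or below $1$, keeping us in the no-hyperbolic-solution regime $L,R\leq 1$). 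By the analysis of Section~\ref{sec:walk11}, the eigenvalues are $2-2\cos\frac{k\pi}{N+A+1}$ (or the shifted $H^{(1,0)},H^{(0,0)}$ variants), with ground state energy $0$ (consistent with parts~1–2) and first excited energy $\Omega((N+A)^{-2})$.

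The final step is to convert $\Omega((N+A)^{-2})$ into $\Omega(N^{-2})$ under the hypothesis $z=\poly(N)$. Since $A = 2^{C+1}-2$ and $z=(A+1)/N$, the assumption $z=\poly(N)$ means $A = zN - 1 = \poly(N)\cdot N = \poly(N)$, hence $N+A = \poly(N)$, and therefore $(N+A)^{-2} = \Omega(N^{-c})$ for some constant $c$ — wait, this only gives an inverse-polynomial, not $N^{-2}$. So here is where I expect the real subtlety, and where I would be careful: the theorem claims $\Omega(N^{-2})$, which forces $A = O(N)$, i.e. $z = O(1)$; reading part~3 again it says ``for $z = \poly(N)$'' but the clean $\Omega(N^{-2})$ bound genuinely requires $z$ bounded, and for general $z=\poly(N)$ one gets $\Omega((N+A)^{-2}) = \Omega(z^{-2}N^{-2})$. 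I would state the bound in the sharper form $\Omega((zN)^{-2})$, which reduces to $\Omega(N^{-2})$ exactly when $z=O(1)$, and note that this is the correct reading of part~3. \textbf{The main obstacle} is thus not the spectral analysis (which is essentially a citation of Section~\ref{sec:walk11}) but the bookkeeping of exactly which endpoint projectors the 4-local idling construction induces at the two ends of the line — one must verify that coupling the unary segment to the ``extra'' and ``idling'' qubit blocks really does yield a single clean line of legal states with endpoint loops of strength $\leq 1$, rather than a branched or decorated graph; establishing that combinatorial structure of the legal clock states (part~1) is the crux, after which parts~2 and~3 follow immediately from the frustration-free uniform ground state and the known $H^{(L,R)}$ spectra.
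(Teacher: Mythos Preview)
Your proposal has a genuine structural gap: the legal clock state graph is \emph{not} a line. You assumed ``each legal clock state has a unique forward and backward transition,'' but this is false for the idling construction. Look at $H_{\textrm{idle}}$ \eqref{Hi2}: whenever the $j$-th extra qubit is on, the $j$-th idling qubit can flip freely, independently of the other idling qubits. Thus a legal state with extra string $1^{k}0^{C-k}$ sits inside a $k$-dimensional hypercube of $2^k$ legal states (all possible values of the first $k$ idling bits), and these hypercubes of growing dimension are strung together through the vertices where the relevant idling bit is zero (that is what the $\ket{0}\bra{0}_{i_j}$ control in $H_{\textrm{extra}}$ \eqref{Hi1} enforces). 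The picture is a line of $N$ unary states attached to a line--square--cube--$\ldots$--$C$-cube chain, not a path of length $N+A$. So the restriction $H|_{\textrm{legal}}$ is not tridiagonal, the identity \eqref{HLapends} does not apply, and the $H^{(L,R)}$ spectral analysis from Section~\ref{sec:walk11} cannot be invoked.

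This is not a cosmetic issue; it is the whole point of the construction, and it is also why your $z$-bookkeeping felt wrong. If the graph \emph{were} a line of length $N+1+A$, the gap would indeed be $\Theta((N+A)^{-2}) = \Theta((zN)^{-2})$, which for $z=\poly(N)$ is far worse than the claimed $\Omega(N^{-2})$. The idling chain achieves $\Omega(N^{-2})$ precisely because the $A$ extra states are packed into hypercubes of diameter $\leq C = O(\log(zN))$, so the graph diameter stays $N + O(\log(zN))$ rather than $N+A$. The paper therefore abandons the line-walk spectral method and instead maps $H|_{\textrm{legal}}$ to a reversible Markov chain $P = \ii - \frac{1}{2(C+1)} H|_{\textrm{legal}}$, then lower bounds its spectral gap via the canonical paths technique \eqref{Pgap}--\eqref{congestion}: one designs explicit paths $\gamma_{s,t}$ between every pair of legal states, bounds the maximum path length by $N+2C$ and the maximum edge congestion by $O(N A)$, and the resulting bound $\Delta_{H|_{\textrm{legal}}} = 2(C+1)\Delta_P \geq \Omega(N^{-2})$ falls out with the $(C+1)$ factors cancelling. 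Your identification of the crux (``establishing that combinatorial structure of the legal clock states'') was correct, but the structure you would find there invalidates the rest of the plan.
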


\begin{proof}
Let us describe the construction and prove its properties.
Recall from \eqref{history} that the {\em history state} of a quantum computation is
$
	\ket{\psi_{\textrm{hist}}} = \frac{1}{\sqrt{N+1}} \sum_{t=0}^{N} \ket{t}_{\textrm{clock}}\otimes \ket{\varphi_t}_{\textrm{data}}$, where
	$\ket{\varphi_t} = U_{t} U_{t-1} \dots U_{1} \ket{\varphi_0}$.
Instead of this, we want our ground state to be an {\em amplified} history state
\begin{align}
	\ket{\psi^{A}_{\textrm{hist}}} = \frac{1}{\sqrt{N+1+A}} \sum_{t=0}^{N+A}
		\ket{t}_{\textrm{clock}} \otimes \ket{\varphi_t}_{\textrm{data}},
		\qquad \textrm{with}\qquad
	\ket{\varphi_{t\geq N}} = \ket{\varphi_N}, \label{amphistory}
\end{align}
i.e. where we do nothing with the data register for clock steps $t\geq N$, leaving the computation done. Let us start with the original unary clock register with qubits $c_0, \dots, c_{N}$, and introduce $C$ {\em extra} unary clock qubits.
Below these, we add another row of $C$ {\em idling} qubits. 
\begin{align}
	\begin{array}{ccccc|cccc}
		c_1 & c_2 & c_3 & \cdots & c_{N+1} & c_{N+2} & c_{N+3} & \cdots & c_{N+1+C}  \\
		\hline
		& &  &  &  & i_{1} & i_{2} & \cdots & i_{C}
	\end{array}.
\end{align}
This clock register has $N+1+2C$ qubits. We want the {\em legal clock states} to have form
\begin{align}
	\begin{array}{cccccc|ccc}
		1 & \cdots & 1 & 0 & \cdots & 0 & 0 & \cdots & 0  \\
		\hline
		 & &  &  &  &  & 0 & \cdots & 0
	\end{array},
\quad \text{or} \quad
	\begin{array}{ccc|cccccc}
		 1 & \cdots & 1 & 1 & \cdots & 1 & 0  & \cdots & 0  \\
		\hline
		  &  &  & 0/1 & \cdots & 0/1 & 0  & \cdots & 0
	\end{array}.
	\label{idling2lines}
\end{align}
with the first row of qubits holding states with a single domain wall,
while the qubits in the {\em idling} row can take any value, as long as the corresponding domain-wall qubits above are in the state $\ket{1}$.

Let us build the Hamiltonian whose ground state is the uniform superposition of these legal states.
We start with the clock-checking Hamiltonian, which raises the energy of states other than with form \eqref{idling2lines}:
\begin{align}
	H_{\textrm{clock}} =
		\ket{0}\bra{0}_{c_1}
		+ \sum_{k=1}^{N+C} \ket{01}\bra{01}_{c_k c_{k+1}}
		+ \sum_{j=1}^{C} \ket{0}\bra{0}_{c_{N+1+j}} \otimes \ket{1}\bra{1}_{i_j}. \label{Hclockcheckidling}
\end{align}
Just as in \eqref{Hdwcc}, the first two terms ensure a single domain-wall (unary) signal in the top row qubits $c_0,\dots,c_{N+C}$. The last term in \eqref{Hclockcheckidling} allows the bottom row idling qubits to be on only if the unary qubit above them is also on.

Next, we need terms that energetically prefer a uniform superposition of these states.
First, we add the original 3-local domain-wall clock interactions
\begin{align}
		H_{\textrm{dw, L}} &=
						\sum_{j=1}^{N}
								\left(\ket{100}-\ket{110}\right)
								\left(\bra{100}-\bra{110}\right)_{c_{j}c_{j+1}c_{j+2}}. \label{HT1}
\end{align}
Then, we add 4-local terms\footnote{Note that if we also couple the data register to the 3-local original unary clock, adding interaction terms that involve 2-qubit gates, the interaction becomes naturally 5-local. However, we could keep it down to 3-local by e.g. using the construction from \cite{new3local}. On the other hand, the extra terms \eqref{Hi1} and \eqref{Hi2} do not involve the data, so the whole Hamiltonian including the data register can be made 4-local.} involving the additional qubits.
\begin{align}
		H_{\textrm{extra}} =
						\sum_{j=1}^{C-1}
								\ket{0}\bra{0}_{i_{j}} &\otimes \left(\ket{100}-\ket{110}\right)
								\left(\bra{100}-\bra{110}\right)_{c_{N+j}c_{N+j+1}c_{N+j+2}}
								 \label{Hi1} \\
							+ \ket{0}\bra{0}_{i_{C}}&\otimes \left(\ket{10}-\ket{11}\right)
								\left(\bra{10}-\bra{11}\right)_{c_{N+C}c_{N+1+C}},
						\nonumber
\end{align}
allows progress from the state $\ket{\cdots 100 \cdots}$ and return from
the state $\ket{\cdots 110 \cdots}$ only if the corresponding idling qubit is off.
Finally, we add the freewheeling term energetically preferring superpositions of idling qubit states if the extra qubits above are on:
\begin{align}
		H_{\textrm{idle}} &=
									\sum_{j=1}^{C}
										\ket{1}\bra{1}_{c_{N+1+j}}\otimes
										\left(\ket{1}-\ket{0}\right)
										\left(\bra{1}-\bra{0}\right)_{i_j}. \label{Hi2}
\end{align}
These terms together ensure that there is a unique, zero-energy ground state of $H_{\textrm{idling chain}} = H_{\textrm{clock}} + H_{\textrm{dw, L}} + H_{\textrm{extra}} + H_{\textrm{idle}}$: the uniform superposition of the legal states \eqref{idling2lines}.
This concludes proof of point 1 in Theorem~\ref{th:idling}.

\begin{figure}%
\begin{center}
\includegraphics[width=11cm]{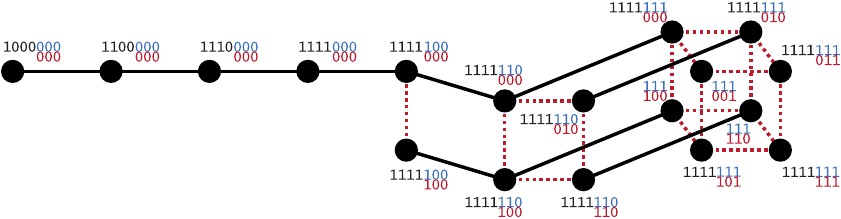}%
\caption{The legal clock states for an $N=4$ domain-wall clock with a $C=3$ idling chain.
A state is labeled by a string made from its $N$ unary bits (black),
extra $C$ unary bits (blue) and $C$ idling bits (red). Each line is a projector onto the antisymmetric combination of the clock states at the vertices.
The graph is a line (the original clock, first 4 vertices), connected to the idling part: a line connected to a square, to a cube, and so on.
	}%
\label{fig:idling43}%
\end{center}
\end{figure}

The {\em legal} clock state subspace $\mathcal{H}_{\textrm{legal}}$ spanned by the legal states \eqref{idling2lines} is invariant under $H_{\textrm{idling chain}}$.
In Figure~\ref{fig:idling43}, we illustrate this subspace and the possible transitions
\begin{align}
	\begin{array}{ccccccccccc}
		100 & \leftrightarrow & 110, & \qquad &
		100 & \leftrightarrow & 110, & \qquad &
		1 & \leftrightarrow & 1, \\
		&&& \qquad &
		0 & & 0 & \qquad &
		0 & & 1 \\
	\end{array}
\end{align}
given by \eqref{HT1}, \eqref{Hi1}, \eqref{Hi2}. There are $N+1$ states of the original unary clock and
\begin{align}
	A = 2 + 4 + \cdots + 2^C = 2\left(2^C-1\right)
\end{align}
extra clock states beyond $t=N$.
Altogether, there are $1+A$ states in which the computation can be considered done.
The ground state is a uniform superposition of legal states, i.e. the desired {\em amplified} history state \eqref{amphistory}.
It takes $2C$ extra qubits to achieve a finished computation ratio $\frac{1+A}{N+1+A}$, improving the original $\frac{1}{N+1}$. To achieve a constant ratio of states with the computation done compared to all states in the amplified history state, we can choose $A=\oo{N}$, for which we need $2C = \oo{\log_2 N}$ extra unary/idling qubits.
This results in point 2 of Theorem~\ref{th:idling}.

Concerning point 3 in Theorem~\ref{th:idling}, recall that the eigenvalue gap for the original domain-wall clock Hamiltonian is $\oo{N^{-2}}$. We will now prove a lower bound on the gap for our new the domain-wall clock with an attached idling chain.

Illegal clock states have constant and positive energy from the term \eqref{Hclockcheckidling}, so it is enough to find a lower bound on the gap in $\mathcal{H}_{\textrm{legal}}$. For this, we will first map the Hamiltonian $H|_{\textrm{legal}}$ to a stochastic matrix $P$ describing a random walk induced by transition rules. We will then relate the spectral gap $\Delta(H|_{\textrm{legal}})$ to the gap $1-\Lambda_1(P)$ by a similarity transformation. Finally, we will find a lower bound on $1-\Lambda_1(P)$ using canonical paths.

We can map our Hamiltonian to a random walk induced by the transition rules \cite{BravyiTerhalStoqFF}.
We define the matrix
\begin{align}
	P=\mathbb{I}-\frac{1}{2(C+1)}H|_{\textrm{legal}},\label{IdlingChainP}
\end{align}
acting on legal clock states.
Let us prove that it is a stochastic matrix, has a unique stationary distribution, and forms a reversible Markov chain with some nice properties.
Let us define $\pi(s)=\left| \langle s\ket{\psi}\right|^2 = (N+1+A)^{-1}$.
Because the uniform superposition over legal clock states $\ket{\psi} = \sum_{s} \sqrt{\pi(s)}\ket{s}$ is a zero eigenvector of $H|_{\textrm{legal}}$,
\begin{align}
	\sum_{t}  P_{s,t}
		= 1-\frac{1}{2(C+1)} \langle s |(H|_{\textrm{legal}}) \sum_t|t\rangle = 1 - 0 = 1.
\end{align}
Therefore, $P$ is a stochastic matrix. Next,
$\sum_s \pi(s)P_{s,t}=\pi(t)-\sum_s\pi(s) \frac{\langle s |(H|_{\textrm{legal}})|t\rangle}{2(C+1)}=\pi(t)$
means the uniform distribution $\pi(s)$ is a unique stationary distribution for $P$.
Since $H|_{\textrm{legal}}$ is real and symmetric, $P$ is also reversible: $\pi(s)P_{s,t}=\pi(t)P_{t,s}$.
Finally, $P_{s,t}=(2(C+1))^{-1}$ if states $s \neq t$ are connected by the Hamiltonian's transition rules,
and $P(s,s) \geq \frac{1}{2}$, as any state $s$ is involved in at most $(C+1)$ terms.

The spectral Gap of $H|_{\textrm{legal}}$ is related to the gap $\Delta_P = 1-\Lambda_2(P)$ of the Markov chain $P$ by 
\begin{align}
	\Delta_{H|_{\textrm{legal}}}=2(C+1) \Delta_P.
	\label{IdlingChainHtoP}
\end{align}
Thus, to find a lower bound on $\Delta_{H|_{\textrm{legal}}}$, we need to find an upper bound on $\Lambda_2(P)$, the second largest eigenvalue of $P$ (the largest eigenvalue of $P$ is 1 and corresponds to the stationary distribution $\pi(s)$).
This way we will find a bound on the gap of $P$.
For this, we will use the canonical path technique \cite{DiaconisStroock, Sinclair},
which says that for a family of canonical paths $\{\gamma_{s,t}\}$ connecting pairs of states $s,t$,
\begin{align}
	1-\Lambda_2(P) \geq (\rho l)^{-1}, \label{Pgap}
\end{align}
where $l=\max_{(s,t)}|\gamma_{s,t}|$ is the maximum length of a canonical path, and $\rho$ (the congestion) is defined by
\begin{align}
	\rho&=\max_{(a,b)\in E}\frac{1}{\pi(a)P_{a,b}}\sum_{(a,b)\in\gamma_{s,t}}\pi(s)\pi(t),\label{congestion}
\end{align}
where $(a,b)$ is an edge in the graph of $P$.
We will now construct a family of canonical paths between any pair of vertices $s,t$, and use its properties to bound $\Delta_P$.

Each vertex in Figure~\ref{fig:idling43}
is a computational basis state, so it can be labeled
by a string $d|e|i$, corresponding to the values of the original domain wall, extra, and idling qubits.
The vertices can be ordered by ordering the strings, e.g.
$1100|000|000 < 1111|110|010 < 1111|110|100 < 1111|111|001$.

\begin{definition}[Canonical paths]\label{CanonicalPaths} Take two states $s$ and $t$, and without loss of generality, assume $s<t$, meaning $d_s|e_s|i_s < d_t|e_t|i_t$ for their binary string labels.

In the special case $e_s=i_s=e_t=i_t = 0\cdots 0$ (both $s,t$ are not in the idling part yet), connect $s$ and $t$ through intermediate domain wall states of the form $1\cdots1 0\cdots 0|0\cdots 0|0\cdots 0$ as in \eqref{unarytrans}.

Next, when $e_s=i_s= 0\cdots 0$ but $e_t \neq 0\cdots 0$ ($s$ is from the original unary clock, but $t$ has nonzero extra qubits), connect the state $s$ to the state $s' = 1\cdots 1|0\cdots 0|0\cdots 0$ as above, and then continue by connecting $s'$ to $t$ as below.

To connect two states from the idling part, i.e. with the original domain wall qubits $d_s=d_t=1\cdots 1$, set $k=1$ and change the extra and idling strings one by one from left to right as follows:
\begin{enumerate}
\item If $(e_s)_k = (e_t)_k$, change $(i_s)_k$ to $(i_t)_k$ if necessary
as in \eqref{idlingtrans}. Increase $k$. Repeat this point until $(e_s)_k \neq (e_t)_k$, in which case continue to point 2, or until you reach the end of the idling chain ($k>C$).
\item We now have $(e_s)_k \neq (e_t)_k$. Change $(e_s)_k$ to $(e_t)_k$ as in \eqref{extratrans} and go back to point 1.
\end{enumerate}
\end{definition}

Let us find an edge that is included in the highest number of canonical paths. There are three types of edges. First, let us look at an edge on the unary part,
\begin{equation}
\begin{tikzpicture}[baseline=-.65ex]
\matrix[
matrix of math nodes,
column sep=0ex,
] (m)
{
	1\cdots 1 &[-0.8ex]0&[-0.7ex] 0 \cdots 0 & 0\cdots 0  \\
	{} & {} &{} & 0 \cdots 0\\
};
\draw
([yshift=-4ex]{$(m-1-3)!.5!(m-1-4)$} |- m.north) -- ([yshift=+1ex]{$(m-1-3)!.5!(m-1-4)$} |- m.south);
\draw
(m-1-1.south west) -- (m-1-4.south east);
\node[preaction={fill=red},inner sep=-1.4pt, anchor=base, rectangle, rounded corners=1.4mm, label={[anchor=center, inner sep=0pt, ]center:${\color{white}0}$}, fit=(m-1-2) (m-1-2)] {};
\end{tikzpicture}
\quad\longleftrightarrow\quad
\begin{tikzpicture}[baseline=-.65ex]
\matrix[
matrix of math nodes,
column sep=0ex,
] (m)
{
	1\cdots 1 &[-0.8ex]1&[-0.7ex] 0 \cdots 0 & 0\cdots 0  \\
	{} & {} &{} & 0 \cdots 0\\
};
\draw
([yshift=-4ex]{$(m-1-3)!.5!(m-1-4)$} |- m.north) -- ([yshift=+1ex]{$(m-1-3)!.5!(m-1-4)$} |- m.south);
\draw
(m-1-1.south west) -- (m-1-4.south east);
\node[preaction={fill=red},inner sep=-1.4pt, anchor=base, rectangle, rounded corners=1.4mm, label={[anchor=center, inner sep=0pt, ]center:${\color{white}1}$}, fit=(m-1-2) (m-1-2)] {};
\end{tikzpicture}\!,
\label{unarytrans}
\end{equation}
changing a $0$ to a $1$ at position $a$.
This edge could be a part of a path that originated in $a-1$ possible vertices $s$
with fewer 1's in the unary clock.
On the other hand, the path could end in $N+1+A-(a-1)$ states, as we have $t>s$.
The edge load for this type of edge is thus $2(a-1)(N+1+A-(a-1))$, with a $2$ for paths from $t$ to $s$.
This is maximized for $a-1=\frac{N+1+A}{2}$,
but we also know that $a<N$. Assuming $A>N$, the maximum within this range of possible $a$'s ia achieved at $a=N$, with the value $2(N-1)(A+2)$. We will see that we will need a larger upper bound  for paths through the other types of edges described bellow.

Second, we could be changing an idling bit
\begin{equation}
\begin{tikzpicture}[baseline=-.65ex]
\matrix[
matrix of math nodes,
column sep=0ex,
] (m)
{
	1\cdots 1 & 1\cdots 1  &  1 & \dots \\
	{}	 & \alpha & 0& \dots \\
};
\draw
([yshift=-4ex]{$(m-1-1)!.5!(m-1-2)$} |- m.north) -- ([yshift=+1ex]{$(m-1-1)!.5!(m-1-2)$} |- m.south);
\draw
(m-1-1.south west) -- (m-1-4.south east);
\draw[densely dashed]
([xshift=-0.1ex,yshift=-0.5ex]m-1-3.north west) -- ([xshift=-0.1ex,yshift=+0.5ex]m-2-3.south west);
\node[preaction={fill=red},inner sep=-1.4pt, anchor=base, rectangle, rounded corners=1.4mm, label={[anchor=center, inner sep=0pt, ]center:${\color{white}0}$}, fit=(m-2-3) (m-2-3)] {};
\end{tikzpicture}
\quad\longleftrightarrow\quad
\begin{tikzpicture}[baseline=-.65ex]
\matrix[
matrix of math nodes,
column sep=0ex,
] (m)
{
	1\cdots 1 & 1\cdots 1  &  1 & \dots \\
	{}	 & \alpha & 1& \dots \\
};
\draw
([yshift=-4ex]{$(m-1-1)!.5!(m-1-2)$} |- m.north) -- ([yshift=+1ex]{$(m-1-1)!.5!(m-1-2)$} |- m.south);
\draw
(m-1-1.south west) -- (m-1-4.south east);
\draw[densely dashed]
([xshift=-0.1ex,yshift=-0.5ex]m-1-3.north west) -- ([xshift=-0.1ex,yshift=+0.5ex]m-2-3.south west);
\node[preaction={fill=red},inner sep=-1.4pt, anchor=base, rectangle, rounded corners=1.4mm, label={[anchor=center, inner sep=0pt, ]center:${\color{white}1}$}, fit=(m-2-3) (m-2-3)] {};
\end{tikzpicture}\!,
\label{idlingtrans}
\end{equation}
or, third, an extra bit
\begin{equation}
\begin{tikzpicture}[baseline=-.65ex]
\matrix[
matrix of math nodes,
column sep=0ex,
] (m)
{
	1\cdots 1 & 1\cdots 1  &  0 & 0\dots0 \\
	{}	 & \alpha & 0& 0\dots0 \\
};
\draw
([yshift=-4ex]{$(m-1-1)!.5!(m-1-2)$} |- m.north) -- ([yshift=+1ex]{$(m-1-1)!.5!(m-1-2)$} |- m.south);
\draw
(m-1-1.south west) -- (m-1-4.south east);
\draw[densely dashed]
([xshift=-0.1ex,yshift=-0.5ex]m-1-3.north west) -- ([xshift=-0.1ex,yshift=+0.5ex]m-2-3.south west);
\node[preaction={fill=red},inner sep=-1.4pt, anchor=base, rectangle, rounded corners=1.4mm, label={[anchor=center, inner sep=0pt, ]center:${\color{white}0}$}, fit=(m-1-3) (m-1-3)] {};
\end{tikzpicture}
\quad\longleftrightarrow\quad
\begin{tikzpicture}[baseline=-.65ex]
\matrix[
matrix of math nodes,
column sep=0ex,
] (m)
{
	1\cdots 1 & 1\cdots 1  &  1 & 0\dots0 \\
	{}	 & \alpha & 0& 0\dots0 \\
};
\draw
([yshift=-4ex]{$(m-1-1)!.5!(m-1-2)$} |- m.north) -- ([yshift=+1ex]{$(m-1-1)!.5!(m-1-2)$} |- m.south);
\draw
(m-1-1.south west) -- (m-1-4.south east);
\draw[densely dashed]
([xshift=-0.1ex,yshift=-0.5ex]m-1-3.north west) -- ([xshift=-0.1ex,yshift=+0.5ex]m-2-3.south west);
\node[preaction={fill=red},inner sep=-1.4pt, anchor=base, rectangle, rounded corners=1.4mm, label={[anchor=center, inner sep=0pt, ]center:${\color{white}1}$}, fit=(m-1-3) (m-1-3)] {};
\end{tikzpicture}\!.\label{extratrans}
\end{equation}
To simplify our expressions, we will only find an upper bound on the number of paths.
Let $a=|\alpha|$. There are not more than $N+1+\left(\sum_{i=1}^{a} 2^{i}\right)+1$
possible starting points (the original unary states plus states with possibly fewer 1's in the extra qubits and different $\alpha$'s).
On the other hand, there are no more than
	$\sum_{j=1}^{C-a} 2^j$
possible endpoints for the paths ($\alpha$ must stay fixed, the rest of the extra and idling bits can change).
Not forgetting a factor of 2 for path symmetry, we thus obtain an upper bound on the number of canonical paths that utilize an edge $(a,b)$:
\begin{align}
\label{IdlingPartEdgeLoad}
	\sum_{\gamma_{s,t} \ni (a,b)} 1
	 \leq
	2 \left( N+2 + \sum_{i=1}^{a} 2^i\right) \left(\sum_{j=1}^{C-a} 2^{j}\right)
	&=2 \left(N+2 + 2^{a+1}-2\right)\left(2^{C-a+1}-2\right) \\
	&\leq 2 \left(N+2\right)\left(2^{C-a+1}-2\right) = 2 \left(N+2\right)A.
\end{align}
as this function is decreasing with $a$, with a maximum at $a=0$.
We can loosen this to $\leq 4N(A+1)$ to get an upper bound for the edge load of all three types of edges.
Recall \eqref{congestion}, and that there are $N+1+A = (1+z)N$ total states, so that $\pi(a) = (N+1+A)^{-1}$ for any state $a$.
We obtain
\begin{align}
	\rho&\leq \frac{2(C+1)4N(A+1)}{N+1+A} = \frac{8z(C+1)N}{z+1}.
\end{align}

The longest canonical path connects $10\cdots0|0\cdots 0|0\cdots 0$ to $1\cdots1|1\cdots 1|1\cdots 1$ and has $l=N+2C<2N$ steps, for $z=\poly(N)$, which gives us $2C= 2\log_2(zN+1)-2<N$ for large enough $N$. Plugging this into \eqref{Pgap} and \eqref{IdlingChainP}, the gaps of the Markov Chain $P$ and $H|_{\textrm{legal}}$ must thus obey
\begin{align}
	\Delta_P &\geq \frac{1}{\rho l} \geq \frac{z+1}{16z(C+1)N^2}, \\
	\Delta_{H|_{\textrm{legal}}}&=2(C+1) \Delta_P
	\geq \frac{z+1}{8zN^2} = \Omega(N^{-2}),
\end{align}
what we wanted to prove.
\end{proof}

Thus, our idling chain construction lets us ``amplify'' the result as much as we want. The ground state can have overlap $\frac{z}{z+1}=1-\epsilon$ with the computation done, while we added only $\log (zN)$ extra qubits.
Moreover, we do not mess up the gap -- it remains $\oOmega{N^{-2}}$, even if $z=\poly(N)$.
Let us compare this to a longer-running domain-wall clock. Amplifying the overlap to $1-\epsilon$ would mean a unary clock of length  $N' = N \epsilon^{-1}$,
while the gap would shrink to $\oOmega{N^{-2} \epsilon^2}$.

\subsection{The second style of idling: Multicog clocks}
\label{sec:cog}

In this Section we describe another way of idling the Feynman clock ``engine''.
In some ways, it is less effective than the idling chain in Section~\ref{sec:idling}
-- it can require more qubits and a higher locality and degree of interactions.
However, it is aimed at a different application -- a dynamical construction. In the legal clock subspace, the dynamics of the evolution with this Hamiltonian are simply a quantum walk on a line, just as for the original unary clock.

\subsubsection{A qutrit surfer on a line}
A unary or {\em domain wall} clock is a progression of states on a line of length $L$ with a single domain wall between 1's and 0's (shown here for $L=5$):
\begin{align}
	\ket{10000},\quad
	\ket{11000},\quad
	\ket{11100},\quad
	\ket{11110}. \label{domainwallstates}
\end{align}
These states form the ground state subspace of the 2-local Hamiltonian
\begin{align}
	H_{\textrm{dwcheck}} = \ket{0}\bra{0}_1 + \ket{1}\bra{1}_{L} + \sum_{i=1}^{L-1} \ket{01}\bra{01}_{i,i+1}.
\end{align}
Let us put another state ``$2$'' on this domain wall and call it a surfer.
A {\em qutrit surfer} on a line of length $L$ is then a linear progression of quantum states (here for $L=5$):
\begin{align}
	\ket{20000},\quad
	\ket{12000},\quad
	\ket{11200},\quad
	\ket{11120},\quad
	\ket{11112}. \label{surferstates}
\end{align}
We can also write down a 2-local Hamiltonian whose ground state subspace is made from these configurations.
It is made from terms on successive pairs of qutrits forbidding domain walls without surfers, as well as double surfers
\begin{align}
	\left(\ket{01}\bra{01} + \ket{10}\bra{10}+\ket{22}\bra{22}\right)_{i,i+1}. \label{surferbadwall}
\end{align}
Next, we need
\begin{align}
	\left(\ket{21}\bra{21}+\ket{02}\bra{02}\right)_{i,i+1}, \label{surferotherwall}
\end{align}
forbidding surfing $\cdots 00\,2\,11 \cdots$ on the other type of domain walls. Finally, we add endpoint terms
\begin{align}
	\ket{0}\bra{0}_1+\ket{1}\bra{1}_L \label{surferend}
\end{align}
that eliminate the boring ``dead'' states $\ket{0\cdots 0}$ and $\ket{1\cdots 1}$.
The whole Hamiltonian is then
\begin{align}
	H^{\text{surf}}_{\text{check}} =
	\ket{0}\bra{0}_1 + \ket{1}\bra{1}_{L}
	+ \sum_{i=1}^{L-1} \left(\ket{01}\bra{01}+\ket{10}\bra{10}+ \ket{21}\bra{21}+\ket{02}\bra{02} +\ket{22}\bra{22}\right)_{i,i+1},
	\label{surferline}
\end{align}
and its zero-energy states\footnote{Note that all of the terms in $H^{\textrm{surf}}_{\text{check}}$ are positive semidefinite; we can make all of them happy at the same time. The frustration free states are given by \eqref{surferstates} -- there is a domain wall somewhere on the line, and a single surfer is riding on it.
} have the form $1 \cdots 1\, 2\, 0\cdots 0$, with a single surfer ``2'' on the domain wall, as in
\eqref{surferstates}.

Let us now add ``dynamics'' of surfer movement (2-local ``rules'') to this model:
\begin{align}
	\cdots 11\,\highlight{20}\, 00 \cdots \,\longleftrightarrow \,\cdots 11\,\highlight{12} \,00\cdots, \label{surfrule}
\end{align}
This rule translates to nearest-neighbor, positive semidefinite Hamiltonian terms
\begin{align}
	H^{\text{surf}}_{\text{dyn}} = \left(\ket{20}-\ket{12}\right)\left(\bra{20}-\bra{12}\right)_{i,i+1}. \label{surfright}
\end{align}
The unique (and still frustration-free) ground state of the surfer model (with the dynamics) is then the uniform superposition over all surfers \eqref{surferstates}
\begin{align}
	\ket{\psi_0} = \sum_{t=1}^{L} \ket{t_{\text{surf}}} = \sum_{t=1}^{L} \ket{ 1\cdots 1 2_t 0 \cdots 0 }.
\end{align}
It is simple to analyze the spectrum of a surfer model -- in the ``legal'' subspace (single-surfer) it is a quantum walk on a line
\begin{align}
	\left( H^{\text{surf}}_{\text{check}} + H^{\text{surf}}_{\text{dyn}} \right)\Big|_{\text{legal}}
	= \sum_{t=1}^{L-1} \left(\ket{t+1}-\ket{t}\right)\left(\bra{t+1}-\bra{t}\right),
\end{align}
which is a tridiagonal matrix with gap \cite{KomaNachtergaele}
\begin{align}
	\Delta_{\text{surfer}}^{\text{line}} = \oo{\frac{1}{L^2}}.
	\label{surferlinegap}
\end{align}
Now the ``illegal'' subspaces contain at least one state directly detected by $H^{\text{surf}}_{\text{check}}$, with constant energy away from 0.
Therefore, \eqref{surferlinegap} is the gap of $H^{\text{surf}}_{\text{check}} + H^{\text{surf}}_{\text{dyn}}$; it scales like $\oo{L^{-2}}$.

\subsection{A qutrit surfer on a cycle}

Let us now wrap the line with the surfer around and put the surfer on a cycle. In contrast to the line, we now allow
the surfer to ride on both types of domain walls, i.e.
\begin{align}
		\cdots11\,2\,00\cdots, \qquad \textrm{and} \qquad
		\cdots00\,2\,11\cdots \label{walls0110}.
\end{align}
We will keep the basic terms $\ket{01}\bra{01}+ \ket{10}\bra{10} +\ket{22}\bra{22}$ from \eqref{surferline} implying surfers sitting only on domain walls. First, we add the dynamics \eqref{surfright} of the surfer riding the domain wall $\cdots 1 2 0 \cdots$.
Second, we also allow it to move on the other type of domain wall as
\begin{align}
	\cdots 00\,\highlight{21}\,11 \cdots \,\longleftrightarrow \,\cdots 00\,\highlight{02}\,11\cdots,
\end{align}
which translates to a Hamiltonian term
\begin{align}
	\left(\ket{02}-\ket{21}\right)\left(\bra{02}-\bra{21}\right)_{i,i+1}. \label{surfleft}
\end{align}
Instead of constraining the endpoints with \eqref{surferend}, we now designate a special interaction for the sites $1$ and $L$ (they sit next to each other when we wrap the line into a cycle). The interaction will consist of
\begin{align}
	\left(\ket{00}\bra{00}+\ket{11}\bra{11}+\ket{22}\bra{22}\right)_{L,1},
	\label{nonequal}
\end{align}
making sure the endpoints are not equal, as well as special dynamics across the endpoints
\begin{align}
	\cdots 11\, \highlight{2\textrm{\textbar}1}\, 11\cdots \,\longleftrightarrow \,\cdots 11\, \highlight{1\textrm{\textbar}2}\, 11\cdots, \\
	\cdots 00\, \highlight{2\textrm{\textbar}0}\, 00\cdots \,\longleftrightarrow \,\cdots 00\, \highlight{0\textrm{\textbar}2}\, 00\cdots,
\end{align}
expressed as the Hamiltonian terms
\begin{align}
	\left(\ket{21}-\ket{12}\right)\left(\bra{21}-\bra{12}\right)_{L,1}
	+\left(\ket{20}-\ket{02}\right)\left(\bra{20}-\bra{02}\right)_{L,1}.
	\label{surfspecial}
\end{align}
The complete surfer-cycle Hamiltonian is then $H^{\text{surfcycle}}_{\text{check}} + H^{\text{surfcycle}}_{\text{dyn}}$ with
\begin{align}
	H^{\text{surfcycle}}_{\text{check}} &=
		\left(\ket{00}\bra{00}+\ket{11}\bra{11}+\ket{22}\bra{22}\right)_{L,1} +
		\sum_{i=1}^{L-1} \left(\ket{01}\bra{01}+ \ket{10}\bra{10} +\ket{22}\bra{22}\right)_{i,i+1}\label{surferbadwallcycle}	\\
	H^{\text{surfcycle}}_{\text{dyn}} &=
	\sum_{i=1}^{L-1} \big(
		\left(\ket{12}-\ket{20}\right)\left(\bra{12}-\bra{20}\right)
		+ \left(\ket{02}-\ket{21}\right)\left(\bra{02}-\bra{21}\right)
		\big)_{i,i+1} \label{surfcycledynamics}\\
	&+ \big(
		\left(\ket{12}-\ket{21}\right)\left(\bra{12}-\bra{21}\right)
	 	+ \left(\ket{02}-\ket{20}\right)\left(\bra{02}-\bra{20}\right)
		\big)_{L,1}. \nonumber
\end{align}

Let us look at what the allowed states are now. Because of the condition \eqref{nonequal}, there has to be at least one domain wall in the system. Because of \eqref{surferbadwallcycle}, there has to be a surfer on this wall (and thus at least one surfer in the system). Finally, the dynamical terms \eqref{surfcycledynamics}, imply that surfers must exist in superpositions of their movement about the cycle. However, two surfers should not appear next to each other because of the last term in \eqref{surferbadwallcycle}. Thus, there is a unique frustration-free ground state of the qutrit cycle Hamiltonian: the uniform superposition of all single-surfer states depicted in Figure~\ref{fig:singlecog}.

\begin{figure}
\begin{center}
\includegraphics[width=17cm]{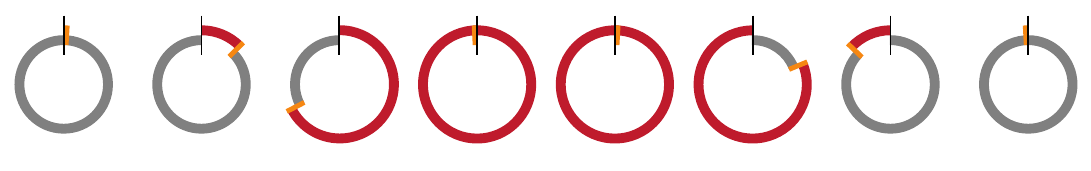}%
\caption{The (progression of) allowed clock states of a single qutrit surfer cycle. Site $1$ is to the right of the vertical line, site $L$ is to the left of it.}%
\label{fig:singlecog}%
\end{center}
\end{figure}

Let us calculate the gap. First, in the single-surfer, proper-domain-wall subspace, the Hamiltonian is a quantum walk on a cycle of length $2L$, with gap $\oo{L^{-2}}$ \cite{QWalksActaPhysica}.

Second, the surfers as well as the $L,1$ boundary ``count'' the number of domain walls. We are on a cycle, so the number of such jumps needs to be even (and the number of surfers needs to be odd) for a possible zero-energy state.
Thus all even-surfer subspaces (including the no-surfer subspace) are lower-bounded in energy by a constant, as they are made from states immediately detected by the terms \eqref{surferbadwallcycle} and \eqref{nonequal}.

What about the subspaces with an odd number of surfers? We can forget about the states with bad domain walls (across $L,1$ or of the $01$ and $10$ type inside) as their energy is also at least a constant.
Thus, we are left with proper states of two possible types:
\begin{align}
		\cdots 11 \,2\, 00\cdots 00 \, 2 \, 11|00\cdots 00\,2\,11 \cdots \\
		\cdots 00 \,2\, 11\cdots 11 \, 2 \, 00|11\cdots 11\,2\,00 \cdots
\end{align}
We will now play the term $H_{22}=\ket{22}\bra{22}$ against the rest of the terms $H_r$.
There is no common 0-energy eigenstate for both of the positive semidefinite terms $H_{22}$ (it hates the 22 sequence) and $H_{r}$ (it demands uniform superpositions over all possible states that the surfers can move to). We now use a geometric lemma \cite{KitaevBook}

\begin{align}
	\lambda^{H_{22} + H_{r}}_0 \geq \min\left\{\lambda_1^{H_{22}},\lambda_1^{H_{r}}\right\} \sin^2 \frac{\theta}{2},
\end{align}
where $\theta$ is the angle between the zero-energy subspaces of these terms.
The ground state of $H_{r}$ in a subspace with $k$ surfers is a uniform superposition over all of their positions (with the rest properly filled with 0s and 1s), and the lowest excited state has energy at least $\lambda_1^{H_r}= \Omega\left(\frac{1}{L^2}\right)$, which we know from solving the quantum walk on a cycle \cite{QWalksActaPhysica}.
The ground state of $H_{22}$ is made from (product) states with no 2 surfers next to each other, and
$\lambda_1^{H_{22}}\geq 1$.
We can get a lower bound on the angle $\theta$ from counting the number of detected states ($\cdots 22\cdots$) in the uniform superposition of well-walled states in the $k$-surfer subspace. There are $p = \binom{L}{k}$ possible positions of $k$ surfers. Out of these, at most
\begin{align}
	\frac{L (L-2) (L-4) \dots  (L-2k+2)}{k!}
\end{align}
arrangements do not have two adjacent surfers.
Let us estimate the ratio of these two numbers.
\begin{align}
	\frac{\#\textrm{no}_{22}}{\#\textrm{all}} &\leq \frac{L (L-2) (L-4) \dots  (L-2k+2)}{L(L-1)(L-2)\dots (L-k+1)}
				\leq \frac{L-2}{L-1}  = 1 - \frac{1}{L-1}.
\end{align}
with the upper bound coming from $k=2$.
Looking at the angle between the null subspaces, we get
\begin{align}
	\cos \theta &= \frac{\sum_{x\in \textrm{all}}\sum_{y\in \textrm{no}_{22}}\braket{x}{y}}{\sqrt{\#\textrm{all}}\sqrt{\#\textrm{no}_{22}}}
		= \sqrt{\frac{\#\textrm{no}_{22}}{\#\textrm{all}}}
	\leq \sqrt{1-\frac{1}{L-1}} \\
	\sin^2 \frac{\theta}{2} &= \frac{1-\cos \theta}{2} \geq \frac{1}{4(L-1)} = \oo{L^{-1}}.
\end{align}
Therefore, we can put a lower bound on the lowest eigenvalue of $H_{r}+H_{22}$ in the subspaces with $k = 2m+1$ surfers, which will be a lower bound on the gap of the whole surfer cycle Hamiltonian:
\begin{align}
	\Delta_{surfer}^{cycle} \geq \lambda^{H_{22}+H_r}_0 = \Omega\left(\frac{1}{L^3}\right).
	\label{surfercyclegap}
\end{align}

\subsection{A multi-cog clock made from several qutrit surfer cycles}

We will now build a clock from multiple coupled qutrit surfer cycles -- {\em cogs}.
Synchronization is not that difficult: the next cog can progress only at the moment when the previous cog has finished its two revolutions (reminiscent of binary addition).
We illustrate the synchronization of $C$ cogs of length $L$ in Figure~\ref{fig:synchrocog}.

\begin{figure}%
\begin{center}
\includegraphics[width=17cm]{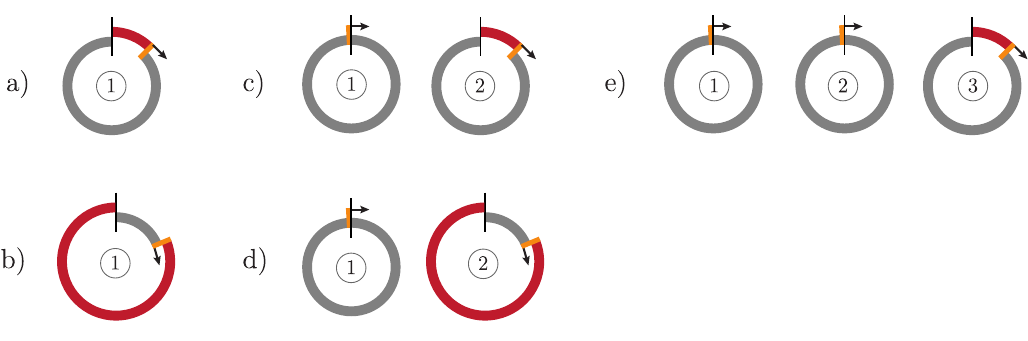}%
\caption{Synchronizing multiple cogs. a) Cog 1 simply progresses through its first revolution. b) Cog 1 simply progresses through its second revolution.
c) When the first cog has finished two revolutions, its transition is coupled to a progression of the second cog.
d) Note that cog 2 could also be in its second revolution.
e) When the second cog has finished two revolutions, {\em and} the first cog has finished two revolutions, they transition together and cog 3 progresses as well.}%
\label{fig:synchrocog}%
\end{center}
\end{figure}

First, let us see what happens for 2 cogs.
We will add an interaction term coupling the end of the second revolution transition of cog 1 with a simple transition of cog 2 as
\begin{align}
	2|0^{(1)}_{L,1} \,\,20^{(2)}_{i,i+1} \,\longleftrightarrow \,
	0|2^{(1)}_{L,1} \,\,12^{(2)}_{i,i+1}
\end{align}
when the second cog is in its first revolution (first line in Figure~\ref{fig:synchrocog}b),
as well as finishing the first revolution of the second cog with
\begin{align}
	2|0^{(1)}_{L,1} \,\,2|1^{(2)}_{L,1} \,\longleftrightarrow \,
	0|2^{(1)}_{L,1} \,\,1|2^{(2)}_{L,1},
\end{align}
and finally
\begin{align}
	2|0^{(1)}_{L,1} \,\,21^{(2)}_{i,i+1} \,\longleftrightarrow \,
	0|2^{(1)}_{L,1} \,\,02^{(2)}_{i,i+1},
\end{align}
when the second cog is in its second revolution (second line in Figure~\ref{fig:synchrocog}b).
The interaction is 4-local.
These transitions need to be rewritten to Hamiltonian terms
just as we rewrote \eqref{surfrule} to \eqref{surfright}.

What will happen when the second cog finishes its second revolution? We're ready to advance a third cog.
If we had 3 cogs, we would need another synchronization interaction to couple the simultaneous end of the second revolution of cogs 1 and 2 with a simple transition of cog 3. First, we have
\begin{align}
	2|0^{(1)}_{L,1} \,\, 2|0^{(2)}_{L,1} \,\, 20^{(3)}_{i,i+1} \,\longleftrightarrow \,
	0|2^{(1)}_{L,1} \,\, 0|2^{(2)}_{L,1} \,\, \,\,12^{(3)}_{i,i+1},
\end{align}
restarting the first two cogs and advancing cog 3. This interaction is 6-local.
Of course, we need a special term for when cog 3 is just finishing its first revolution,
\begin{align}
	2|0^{(1)}_{L,1} \,\, 2|0^{(2)}_{L,1} \,\, 2|1^{(3)}_{L,1} \,\longleftrightarrow \,
	0|2^{(1)}_{L,1} \,\, 0|2^{(2)}_{L,1} \,\, \,\,1|2^{(3)}_{L,1},
\end{align}
and a class of terms for when cog 3 is inside its second revolution:
\begin{align}
	2|0^{(1)}_{L,1} \,\, 2|0^{(2)}_{L,1} \,\, 21^{(3)}_{i,i+1} \,\longleftrightarrow \,
	0|2^{(1)}_{L,1} \,\, 0|2^{(2)}_{L,1} \,\, \,\,02^{(3)}_{i,i+1}.
\end{align}
All of these 3-cog synchronization terms are 6-local.

Following this line of thinking, we can enforce $C$-cog synchronization with $2C$-local terms.

\subsection{The cost of a clock with $C$ cogs of length $L$.}

How long can such a clock run? For $C$ cogs of length $L$, the number of available timesteps (if the last cog does not keep revolving over but stops after its second revolution) is
\begin{align}
	N = (2L)^C.
\end{align}
As noted above, we need $2C$-local terms for $C$-cog synchronization. This is OK for constant $C$. Thus, when we use $C$ cogs, each needs to have length $L=\frac{1}{2}N^{\frac{1}{C}}$ and we are using $CL = \frac{C}{2}N^{\frac{1}{C}}$ qutrits for this. The space requirement is thus a constant root of $N$, i.e. sublinear space, which was our plan.

How costly are the interactions? We require $2C$-local synchronization interactions. The particles that are involved the most are the $L,1$ qutrits of the first cog. They appear in synchronization interactions with all of the cogs.
Their degree (number of interactions) is
\begin{align}
	2L_{4-local} + 2L_{6-local} + \dots + 2L_{2C-local} \approx 2LC = CN^{\frac{1}{C}}.
\end{align}

Let us calculate the gap.
In the proper clock subspace (one surfer in each cog), it will be
\begin{align}
	\Delta_{multicog} = \oo{\frac{1}{N^2}},
	\label{multicoggap}
\end{align}
as the Hamiltonian in that subspace is just a quantum walk on a cycle of length $N$.
The next lowest energy subspace will be one with a {\em single} cog that is messed up (having several surfers). We know a lower bound on its energy from the 2-surfer cog lower bound \eqref{surfercyclegap}. It is $\oOmega{L^{-3}}$, which implies a lower bound at least $\oOmega{N^{-3/C}}$ for the messed-up subspaces. This is larger than \eqref{multicoggap}, and thus not important.

\subsection{Computation and interaction with data}

So far, no interaction with any data was put in. If we want to compute with our multi-cog clock, we need interactions of every cog clock pair (for addressing which unitary to apply) + the data, so $2C+2$-local interactions (simplest counting, using 2-local unitaries).

The simplest implementation would have just 2 cogs of length $\frac{1}{2}\sqrt{N}$, together being able to run for time $N$. The interactions would be 4-local for 2-cog synchronization, 6-local for implementing 2-qubit gates, and 5-local for initialization and readout terms.

\subsection{Doing something like this with qubits}

What if we wanted to do something similar by qubits? It is entirely possible with domain walls, just the transitions would be 4-local. We would introduce terms that hate two domain walls near each other, i.e. $101$ or $010$. Across the points $L,1$, we could use a transition like $10|11\,\leftrightarrow\,
11|01$ that would change a $\cdots 111000|111111\cdots$ turn (with a string of 1's ``growing'' into a string of 0's) of the cog into a
$\cdots 111111|000111\cdots$ turn (with a string of 0's ``growing'' into a string of 1's). The transition $01|00\,\leftrightarrow\,
00|10$ could then be the one that would synchronize with the next cog.
The synchronization would then be $4C$-local. Everything would remain frustration-free and would still work. The gaps would remain $\oOmega{1/\poly(L)}$ or $\oOmega{1/\poly(N)}$. Really, the price we pay is just locality of interactions.

However, what if we allowed for frustration? we could use a pulse-clock then.

\subsection{Idling with a multicog clock}

Let's say we're amplifying a computation with a multicog clock and we want to
add $A=\oo{N}$ extra clock states.
For the idling, we can choose to append a 2-cog clock with $L=\sqrt{N}$.
The cost (blowup) in size is $\oo{\sqrt{N}}$ qutrits (we now also know how to do it with only qubits). Of course, when we use a 2-cog clock for the computation as well, so the overall number of clock states we need remains $\oo{\sqrt{N}}$. On the other hand, the gap of the Hamiltonian still scales like $\oo{N^{-2}}$, as the Hamiltonian in the legal clock subspace is just a walk on a cycle of length $\oo{N}$. On the other hand, illegal clock states are immediately detected by terms in the Hamiltonian, so they have at least constant energy (and there are no transitions from legal to illegal clock states).

\section{Pulse clocks: lower locality that requires tuning.
}
\label{sec:tuning}

Finally, in this Section we present our last result: how to tune a pulse clock (see Section~\ref{sec:pulse}) by frustration. Using terms that are not positive semidefinite, we can choose to energetically prefer a subspace with a single excitation. This way a pulse clock can also be used in constructions for QMA-complete problems, improving on locality of the required terms, while paying the price of a smaller gap (after rescaling the Hamiltonian to have norm-1 terms).

This calculation can be useful for constructions in Hamiltonian complexity, where 2-local interactions in the clock register are needed, but one also needs to ensure the proper clock states are selected. A weaker version of the calculation can be found in Section 4.2 of the paper \cite{GottesmanHastings}, where Gottesman \& Hastings investigate the possible dependence of the entanglement entropy on the gap of a qudit chain. They also add a tuning Hamiltonian with a gap $\oOmega{N^{-4}}$ that prefers the single-excitation subspace. It means their whole Hamiltonian has a $\oOmega{N^{-4}}$ gap. Meanwhile, the tuning Hamiltonian we present here has a gap $\oOmega{N^{-3}}$ between the ground state energy and the lowest energy states from subspaces with $z\neq 1$ excitations. Unfortunately, the better bound on the gap of the tuning Hamiltonian doesn't improve the lower bound on the gap of the whole Gottesman-Hastings Hamiltonian, whose scaling is then governed by other terms, already within the 1-excitation subspace.

Recall from Section~\ref{sec:pulse} that a {\em pulse} clock is a progression of states on a line with a single excitation
$\ket{1000\cdots}$, $\ket{0100\cdots}$, etc., connected to each other by transitions governed by the 2-local Hamiltonian $H_N^{\textrm{pulse,L}}$ \eqref{Hpulseproj}.
This positive semidefinite Hamiltonian conserves the number of excitations (the 1's) and thus has $N+1$ invariant subspaces with a fixed number of 1's.
The Hamiltonian energetically prefers symmetry, so that each uniform superposition of states with a particular number of $1$'s is a zero energy, frustration-free ground state of $H_N^{\textrm{pulse,L}}$.
We would like $\ket{\tilde{1}}$, the uniform superposition of states with a single 1 to be the unique ground state. We will achieve this by adding local terms to $H_N^{\textrm{pulse,L}}$.

\begin{theorem}[Pulse clock tuning]
There exists a 2-local, nearest neighbor Hamiltonian on a chain of length $N$, whose unique ground state is $\ket{\tilde{1}} = \frac{1}{\sqrt{N}}\left(\ket{1000\cdots}+\ket{0100\cdots}
+\ket{0010\cdots}+\dots\right)$, the uniform superposition of states with a single excitation, it has ground state energy zero, and and whose gap is $\Delta_{\tilde{H}} = \Omega(N^{-3})$.
\end{theorem}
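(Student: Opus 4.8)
The plan is to build $\tilde H$ by adding to the frustration-free pulse walk $H_N^{\textrm{pulse,L}}$ of \eqref{Hpulseproj} two diagonal ``tuning'' pieces: a unit-strength penalty on neighbouring excitations, $B=\sum_{x=1}^{N-1}\ket{11}\bra{11}_{x,x+1}$ (``strongly hating $11$''), and a weak site reward $-\delta\sum_{x=1}^{N}\ket{1}\bra{1}_x$ (``weakly preferring $1$ over $0$''), together with a constant shift so that the ground energy is exactly $0$; concretely $\tilde H=H_N^{\textrm{pulse,L}}+B+\delta\left(\ii-\sum_{x=1}^{N}\ket{1}\bra{1}_x\right)$ for a constant $\delta=\Theta(N^{-3})$ fixed below. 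Every term is nearest-neighbour and at most $2$-local, and all of them commute with the excitation number $\hat N=\sum_x\ket{1}\bra{1}_x$, so I would analyse $\tilde H$ sector by sector. On the $z$-excitation sector $\mathcal H_z$ the tuning piece acts as the scalar $\delta(1-z)$, $B$ is diagonal in the configuration basis, and $H_N^{\textrm{pulse,L}}|_{\mathcal H_z}$ is the unit-weight graph Laplacian of the connected hopping graph of $z$ indistinguishable walkers on the line, i.e., the generator of the symmetric exclusion process; its kernel is one-dimensional, spanned by the uniform superposition $\ket{u_z}$, and $\ket{u_1}=\ket{\tilde 1}$.

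The three easy sectors are immediate. On $\mathcal H_0$ the only state is $\ket{0\cdots0}$, annihilated by $H_N^{\textrm{pulse,L}}$ and by $B$, so its energy equals $\delta>0$. On $\mathcal H_1$ the penalty $B$ and the shift $\delta(1-z)$ both vanish, so $\tilde H|_{\mathcal H_1}=H_N^{\textrm{pulse,L}}|_{\mathcal H_1}$ is exactly the Laplacian walk on a line of $N$ vertices: its unique ground state is $\ket{\tilde 1}$ at energy $0$, and its gap is $\oo{N^{-2}}$ by Section~\ref{sec:walk11}. Thus $\ket{\tilde 1}$ is the only zero-energy vector of $\tilde H$ inside $\mathcal H_0\cup\mathcal H_1$, with everything else at least $\oOmega{\min\{\delta,N^{-2}\}}=\oOmega{N^{-3}}$ above it.

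The work is in the sectors $\mathcal H_z$, $z\ge2$. Here I would apply Kitaev's geometric lemma for two positive semidefinite operators — the same lemma used above for the surfer cycle — to $A=H_N^{\textrm{pulse,L}}|_{\mathcal H_z}$ and $B|_{\mathcal H_z}$. Their kernels meet only in $0$: $\ker A=\mathbb C\ket{u_z}$, $\ker B=\mathrm{span}$ of the configurations with no ``$11$'', and for $z\ge2$ the uniform state $\ket{u_z}$ has nonzero weight on ``$11$''-containing configurations. The lemma gives $\lambda_0(A+B)\ge\min\{\lambda_1(A),\lambda_1(B)\}\sin^2\tfrac\theta2$, where $\lambda_1(B)\ge1$ (it is integer-valued and diagonal), $\lambda_1(A)=\oo{N^{-2}}$ — the fixed-occupancy gap of the symmetric exclusion process / ferromagnetic Heisenberg chain on the path, which is $\oo{N^{-2}}$ \emph{uniformly in} $z$ — and $\cos^2\theta=\|\Pi_{\ker B}\ket{u_z}\|^2=\binom{N-z+1}{z}\big/\binom{N}{z}$, the fraction of weight-$z$ length-$N$ binary strings with no two adjacent ones. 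Expanding this ratio as $\prod_{i=0}^{z-1}\frac{N-z+1-i}{N-i}\le\left(1-\tfrac{z-1}{N}\right)^{z}$ gives $\sin^2\tfrac\theta2=\tfrac12(1-\cos\theta)=\oOmega{\min\{1,z(z-1)/N\}}$, so $\lambda_0(A+B)=\oOmega{\min\{1,z(z-1)/N\}\,N^{-2}}$, i.e., $\oOmega{z(z-1)N^{-3}}$ for $2\le z\lesssim\sqrt N$ and $\oOmega{N^{-2}}$ for larger $z$. (For $z>(N+1)/2$ one does not even need the lemma: every weight-$z$ string contains at least $2z-N-1$ occurrences of ``$11$'', so $B|_{\mathcal H_z}\succeq(2z-N-1)\,\ii$ outright.)

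Now combine: $\lambda_0(\tilde H|_{\mathcal H_z})=\lambda_0(A+B)-\delta(z-1)$, and the point is that $\lambda_0(A+B)$ grows at least \emph{quadratically} in $z$ (like $z(z-1)N^{-3}$) in the regime where it matters, while the reward $\delta(z-1)$ is only \emph{linear}; hence for a sufficiently small constant $\delta=\Theta(N^{-3})$ every sector $z\ge2$ stays at energy $\oOmega{N^{-3}}$ (the tight case being $z=2$, which yields exactly $\oOmega{N^{-3}}$; intermediate and large $z$ yield $\oOmega{N^{-2}}$). Together with the $\mathcal H_0$ energy $\delta=\oOmega{N^{-3}}$ and the $\oo{N^{-2}}$ gap inside $\mathcal H_1$, this makes $\tilde H$ positive semidefinite with $\ket{\tilde 1}$ its unique zero-energy ground state and spectral gap $\min\{\delta,\,\oo{N^{-2}},\,\oOmega{N^{-3}}\}=\oOmega{N^{-3}}$; rescaling so that each interaction term has norm at most $1$ changes this only by a constant. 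I expect the main obstacle to be precisely this $z\ge2$ analysis and the single-parameter juggling of $\delta$: it must be large enough that the empty sector $\mathcal H_0$ is not the bottleneck and small enough that no high-occupancy sector dips below zero, and this tension is exactly what pins the gap at $\Theta(N^{-3})$ rather than $\Theta(N^{-2})$. A secondary input that must be cited or re-derived is the $z$-independent $\oo{N^{-2}}$ gap of $H_N^{\textrm{pulse,L}}$ within each fixed-excitation sector.
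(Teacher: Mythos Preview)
Your construction and proof are essentially identical to the paper's: the same Hamiltonian $\tilde H=H_N^{\textrm{pulse,L}}+\sum_x\ket{11}\bra{11}_{x,x+1}+V(\ii-\sum_x\ket{1}\bra{1}_x)$ with $V=\Theta(N^{-3})$, the same sector-by-sector analysis, the same application of Kitaev's geometric lemma to $A=H_N^{\textrm{pulse,L}}|_{\mathcal H_z}$ and $B=\sum_x\ket{11}\bra{11}|_{\mathcal H_z}$ with the Koma--Nachtergaele Heisenberg gap $\lambda_1^A=\Theta(N^{-2})$, and the same counting of no-$11$ strings giving $\cos^2\theta=\binom{N-z+1}{z}/\binom{N}{z}$. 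The only difference is that the paper bounds the product by its smallest factor to get $\sin^2(\theta/2)\ge z/(4N)$ (linear in $z$), whereas you keep all factors to get $\sin^2(\theta/2)=\Omega(z(z-1)/N)$ (quadratic); your sharper bound makes the balancing of $\delta$ against the reward $\delta(z-1)$ slightly cleaner, but both yield the same $\Omega(N^{-3})$ gap.
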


\begin{proof}
We start with the 2-local, nearest-neighbor pulse clock, Laplacian-type Hamiltonian \eqref{Hpulseproj} on a chain, whose $N+1$ ground states are $\ket{\tilde{z}}$, the uniform superpositions of states with exactly $z$ ones for $Z = 0,\dots, N$.
\begin{align}
	H_{N}^{\textrm{pulse,L}} &= \sum_{x=1}^{N}
	\left(\ket{01}-\ket{10}\right)\left(\bra{01}-\bra{10}\right)_{x,x+1}. \label{Hpulseproj2}
\end{align}
Our goal is to raise the energy of the uniform superpositions with $z\neq 1$ of $1$'s. In particular, we need to deal with the {\em dead} subspace that contains only the product state $\ket{0\cdots 0}$, as well as the higher-$z$ subspaces.
Let us add
\begin{align}
	H^{\textrm{tuning}}_N = V\, \ii - V \sum_{x=1}^N \ket{1}\bra{1}_x + \sum_{x=1}^{N-1} \ket{11}\bra{11}_{x,x+1}, \label{Htune}
\end{align}
another excitation-number preserving Hamiltonian.
The first term is a constant shift. The second term prefers excitations. The third term energetically punishes excitations that sit next to each other. Let us analyze
\begin{align}
	\tilde{H} = H_N^{\textrm{pulse,L}} + H^{\textrm{tuning}}_N \label{Htilde}
\end{align}
in each of its invariant subspaces labeled by the number of excitations $z$.
First, there is only one state $\ket{00\cdots 0}$ that lives in the 0-excitation subspace. The Hamiltonian $H^{\textrm{tuning}}_N$ gives it energy $V$.

Second, the Hamiltonian $H^{\textrm{tuning}}_N$ is diagonal for all states from the single-excitation subspace, giving them energy $V-V=0$. The term $H_N^{\textrm{pulse,L}}$ dictates that the ground state there is $\ket{\tilde{1}}$, with energy 0.

Third, let us look at subspaces with $z>1$ excitations and find a lower bound on the lowest eigenvalue of $H^{\textrm{pulse,L}}_N+H^{\textrm{tuning}}_N$ in each such subspace.
For a specific $z$, let us add a shift to the Hamiltonian and view it as a sum of two positive semidefinite terms $A$ and $B$:
\begin{align}
	H'_z &= \tilde{H} + (z-1)V \ii = \underbrace{H^{\textrm{pulse,L}}_N}_{A}
	+ \underbrace{H^{\textrm{tuning}}_N + (z-1)V \ii}_{B}. \label{HAB}
\end{align}
Restricting $B$ to the subspace $\mathcal{H}_z$ with $z$ excitations gives us
\begin{align}
	B \big|_{\mathcal{H}_z} = \sum_{x=1}^{N-1}\ket{11}\bra{11}_{x,x+1},
\end{align}
with the shift canceling the contribution of the first two terms in \eqref{Htune}.
Therefore, $B$ restricted to $\mathcal{H}_z$ is a sum of projectors and thus positive semidefinite. This lets us use a geometric lemma \cite{KitaevBook} about the lowest eigenvalue of a sum of two positive semidefinite operators $A$ and $B$ whose ground state subspaces do not intersect. It says that there is a lower bound
\begin{align}
	\lambda^{A+B}_0 \geq \min\left\{\lambda_1^{A},\lambda_1^{B}\right\} \sin^2 \frac{\theta}{2}
\end{align}
on the lowest eigenvalue of a sum of positive semidefinite operators $A+B$,
in terms of the second lowest eigenvalues of $A$ and $B$, and the angle $\theta$ between the ground state subspaces of $A$ and $B$.

The ground state of $A$ in $\mathcal{H}_z$ is the uniform superposition over all states with $z$ excitations; the first excited state for $A$ has energy $\lambda^A_1 = \Theta\left(\frac{1}{L^2}\right)$, as we know from the gap of the Heisenberg model \cite{KomaNachtergaele}.
The ground states of $B$ are states with $z$ excitations that do not have any ``11'' substrings. The lowest excited state of $B$ has energy $\lambda^B_1 = 1$ for $z\leq \frac{N+1}{2}$ and even more
for $z>\frac{N+1}{2}$.

Let us now calculate $\theta$, the angle between the null subspaces of $A$ and $B$.
The ground state subspace of $A$ is a single state -- the symmetric superposition $\ket{\tilde{z}}$ with $z$ 1's.
The vector from the ground subspace of $B$ with the largest overlap with $\ket{\tilde{z}}$
is the uniform superposition of all the states from the ground subspace of $B$, as these states all appear in $\ket{\tilde{z}}$. We thus only need to count their number $\#\textrm{no}_{11}$, and express
\begin{align}
	\cos \theta_z = \braket{\tilde{z}}{\tilde{z}_{\textrm{without }11\textrm{'s}}}
	= \sqrt{\frac{\#\textrm{no}_{11}}{\#\textrm{all}}}. \label{costheta}
\end{align}

There are $\#\textrm{all} = \binom{N}{z}$ strings with $z$ ones. What is the number $\#\textrm{no}_{11}$ of $N$-bit strings with $z$ ones and no $11$ substrings?
It makes sense to count them only for $z\leq \frac{N+1}{2}$, as above that $\#\textrm{no}_{11}=0$. All such strings must have a ``backbone'' --
a collection of $z$ substrings (10, 10, \dots, 10, 1) creating $z+1$ bins
\begin{align}
	\cdots |_{10}| \cdots |_{10}| \cdots |_{10}| \cdots |_{10}| \cdots |_{10}| \cdots |_{1}| \cdots
\end{align}
into which we need to distribute the remaining $N-(2z-1)$ zeros.
Therefore,
\begin{align}
	\#\textrm{no}_{11} &= \binom{ N-(2z-1)+z}{z} = \binom{ N-z+1}{z}.
\end{align}
For $\frac{N+1}{2}> z\geq 2$ this implies
\begin{align}
	\cos \theta &= \sqrt{\frac{\#\textrm{no}_{11}}{\#\textrm{all}}}
	= \sqrt{\frac{\binom{N -z + 1}{z}}{\binom{N}{z}}}
	= \sqrt{\frac{(N-z)}{N}\frac{(N-z-1)}{(N-1)}\dots\frac{(N-z-(z-2))}{
	(N-(z-2))}} \nonumber\\
	& \leq \sqrt{\frac{N+2-2z}{N+2-z}} = \sqrt{1-\frac{z}{N+2-z}}
	\leq \sqrt{1-\frac{z}{N}} \leq 1-\frac{z}{2N}.  \label{cosbound}
\end{align}
This upper bound gives us a lower bound on $\sin^2 \frac{\theta}{2} = \frac{1}{2}(1-\cos \theta) \geq \frac{z}{4N}$.
Plugging everything into the geometric lemma \cite{KitaevBook}, for $\frac{N+1}{2}> z\geq 2$ we obtain
\begin{align}
	\lambda^{A+B}_0 \geq \min\left\{\lambda_1^{A},\lambda_1^{B}\right\} \sin^2 \frac{\theta}{2}
	= \oOmega{N^{-2}} \frac{z}{4N} = \oOmega{z N^{-3}}.
\end{align}
Turning back to our Hamiltonian $\tilde{H}$ before the shift \eqref{Htilde}, we find that the ground state energy in each of the $\frac{N+1}{2}> z\geq 2$ subspaces is
\begin{align}
	E_z = \lambda^{A+B}_0 - (z-1)V \geq \Omega(z N^{-3}) - (z-1)V = \Omega(z N^{-3}),
\end{align}
if we choose
\begin{align}
	V = N^{-3}. \label{Vchoice}
\end{align}

Meanwhile, for $z\geq \frac{N+1}{2}$ we have $\cos \theta = 0$ and $\sin^2 \frac{\theta}{2} = 1$
and $\lambda^{A+B}_0 = \Omega(N^{-2})$
and $E_z = \lambda^{A+B}_0 - (z-1)V \geq \Omega(N^{-2}) - (z-1)V = \Omega(N^{-2})$, for the choice \eqref{Vchoice}. Finally, recall that for $z=0$ we had $E_0 = V = \oo{N^{-3}}$.

Therefore, choosing $V=N^{-3}$ in the tuning term \eqref{Htune}, adding it to the pulse clock, Laplacian-type Hamiltonian \eqref{Hpulseproj2} does what we wanted to prove. The state $\ket{\tilde{1}} = \frac{1}{\sqrt{N}} \left(\ket{100\cdots} + \ket{010\cdots} + \dots\right)$ is the unique, zero-energy ground state of
\eqref{Htilde}, and the gap of $\tilde{H}$ is $\Delta_{\tilde{H}}= \Omega(N^{-3})$.

Note that the improvement over the calculation in Eqn. (73) in \cite{GottesmanHastings} comes from \eqref{cosbound}. There, we add a factor of $z$, helping us increase their gap lower bound from $\Omega(N^{-4})$ to $\Omega(N^{-3})$.

However, this result may possibly be further improved, as numerical investigation of $V=N^{-\frac{3}{2}}$ indicate a gap scaling as $\Theta\left(N^{-2}\right)$.

\end{proof}

\section{Acknowledgements}
LC has done the work on this paper during his PhD studies at the Faculty of Mathematics, Physics and Informatics of the  Comenius University in Bratislava. DN's research has received funding from the People Programme (Marie Curie Actions) EU's 7th Framework Programme under REA grant agreement No. 609427. This research has been further co-funded by the Slovak Academy of Sciences.
LC and DN were further supported by the Slovak Research and Development Agency grant QETWORK APVV-14-0878. DN also acknowledges funding from the project OAQS VEGA 2/0130/15.
ZL acknowledges support by ARO Grant W911NF-12-1-0541, NSF Grant CCF-1410022 and Templeton Foundation Grant 52536.
ZL and DN also thank the Simons Institute for the Theory of Computing in Berkeley, where parts of this work were carried out, for their hospitality.


\bibliographystyle{abbrv}	
\bibliography{TheClockPaper}

\end{document}